\newcommand{\ket}[1]{\big| #1 \big\rangle}
\newcommand{\bra}[1]{\big\langle #1 \big|}
\newcommand{\braket}[2]{\big\langle #1 \big| #2 \big\rangle}             
\newtheorem{theorem}{Theorem}[section]
\newtheorem{definition}[theorem]{Definition}
\newenvironment{proof}[1][Proof]{\begin{trivlist}
\item[\hskip \labelsep {\bfseries #1}]}{\end{trivlist}}
\newcommand{\qed}{\nobreak \ifvmode \relax \else
      \ifdim\lastskip<1.5em \hskip-\lastskip
      \hskip1.5em plus0em minus0.5em \fi \nobreak
      \vrule height0.75em width0.5em depth0.25em\fi}
\begin{document}

\title{Establishing the equivalence between Szegedy's and \\ coined quantum walks using the staggered model}
\author{Renato Portugal\footnote{portugal@lncc.br} \\
\\
{\small National Laboratory of Scientific Computing - LNCC} \\
{\small Av. Get\'{u}lio Vargas 333, Petr\'{o}polis, RJ, 25651-075, Brazil}
}

\maketitle

\begin{abstract}
Coined Quantum Walks (QWs) are being used in many contexts with the goal of understanding quantum systems and building quantum algorithms for quantum computers. Alternative models such as Szegedy's and continuous-time QWs were proposed taking advantage of the fact that quantum theory seems to allow different quantized versions based on the same classical model, in this case, the classical random walk. In this work, we show the conditions upon which coined QWs are equivalent to Szegedy's QWs. Those QW models have in common a large class of instances, in the sense that the evolution operators are equal when we convert the graph on which the coined QW takes place into a bipartite graph on which Szegedy's QW takes place, and vice versa. We also show that the abstract search algorithm using the coined QW model can be cast into Szegedy's searching framework using bipartite graphs with sinks.
\end{abstract}

\section{Introduction}

The discrete-time coined QW on the line was proposed in the early 1990s in Ref.~\cite{Aharonov:1993}, and is one of the first quantization models of classical random walks. The generalization for regular graphs was proposed in Ref.~\cite{Aharonov:2000}. Early algorithms based on coined QWs with advantage over the classical counterparts were obtained for the element distinctness problem~\cite{Ambainis:2004} and for searching a marked node in a hypercube~\cite{Shenvi:2003}. Many important results were obtained about their asymptotic limit~\cite{Kon02}, localization~\cite{IKK04}, universality~\cite{LCETK10}, and many others as described in reviews~\cite{Ven12,Kon08,Kendon:2007}. Many experimental proposals were described~\cite{Travaglione2002,sanders2003quantum,MPO15}, and experimental implementations were performed~\cite{Karski2009,Zahringer2010,Schreiber2010}.

Using a different quantization procedure, Szegedy~\cite{Szegedy:2004} proposed a new coinless discrete-time QW model on bipartite graphs and was able to provide us with a natural definition of quantum hitting time.  Szegedy also developed QW-based search algorithms, which can detect the presence of a marked vertex at a hitting time that is quadratically smaller than the classical average hitting time on ergodic Markov chains. Szegedy's model was also used for the spatial search problem, that is, for finding the location of a marked vertex in a graph~\cite{Magniez:2011,Krovi:2010}, and for searching triangles~\cite{mss07}.  

The staggered QW model~\cite{PSFG15} plays an important role to connect Szegedy's and coined QWs. Ref.~\cite{Meyer96} analyzed a version of quantum cellular automata that can be converted into a one-dimensional staggered QW, which is equivalent to a generalized version of the coined QW on the line as shown in Refs.~\cite{HKS05,PBF15}. Attempts to obtain a staggered version of QWs for the two-dimensional lattice have appeared in Refs.~\cite{Patel05,Falk:2013}, but Ref.~\cite{PSFG15} showed that the graph considered on those references is a degree-6 crossed lattice, which is not planar. Ref.~\cite{PSFG15} obtained a formulation of staggered QWs on generic graphs, and showed that Szegedy's framework is a subcase of the staggered QW model by using the line graph of the bipartite graph employed in Szegedy's model.


In this work, we characterize which coined QWs can be cast into Szegedy's framework, and which Szegedy's QWs can be converted into the standard coined QW formalism. In the first direction, the shift operator of the coined QW must be Hermitian and the coin must be an orthogonal reflection, which is a unitary and Hermitian operator with special properties in terms of orthogonality of the $(+1)$-eigenvectors. The class of orthogonal reflections includes the Grover and the Hadamard coins. In the other direction, the bipartite graph on which Szegedy's QW takes place must have a special kind of regularity: the degree of the vertices in one of the disjoint sets of vertices must be 2, and the weights associated with the edges incident on those vertices must be equal. Those results show that the Szegedy and the coined QW models share a large class of instances. The staggered QW model bridges the coined and Szegedy's models.

Szegedy's and coined models apparently seem to employ different methods for searching marked vertices. We show that those methods are strongly related. A remarkable searching method based on coined QWs is the abstract search algorithm~\cite{Ambainis:2005,Portugal:book}, which uses coin $(-I)$ on the marked vertices and the Grover coin on the other ones. We show that this method can be cast into Szegedy's searching framework on bipartite graphs with sinks. On the other direction, under some assumptions on the stochastic matrix of the bipartite graph, Szegedy's searching framework can be converted into an equivalent searching method using coined QWs.

The structure of the paper is as follows. In Sec.~\ref{sec:MD}, we present formal definitions of the flip-flop coined QW model on regular and non-regular graphs, Szegedy's QW model on bipartite graphs, the staggered QW model, and important concepts that are employed throughout the work. In Sec.~\ref{sec:MR}, we prove two theorems which connect Szegedy's QWs with coined QWs on regular graphs. In Sec.~\ref{sec:GNR}, we extend the connection for non-regular graphs and give an example. The proofs are left to the Appendix. In Sec.~\ref{sec:searching}, we address the equivalence between the abstract search algorithm using coined QWs and Szegedy's searching framework. In Sec.~\ref{sec:conc}, we draw our conclusions.

\section{Main Definitions}\label{sec:MD}

Let $\Gamma(V,E)$ be a simple graph with vertex set $V$ and edge set $E$ with cardinalities $|V|$ and $|E|$, respectively. We associate the set of vertices, which represents the classical positions, with the vectors of an orthonormal basis of the Hilbert space ${\cal H}^{|V|}$. In the coined QW model on $d$-regular graphs, the total Hilbert space is ${\cal H}^{|V|}\otimes {\cal H}^{d}$, and the walker has $d$ classical directions to move~\cite{Aharonov:2000}. The edges of simple graphs are non-directed, but in many cases we have to consider a non-directed edge equivalent to two superposed opposite arrows.

\begin{definition}\label{def:coinedQW}
The \textbf{standard flip-flop coined QW} on a $d$-regular graph $\Gamma(V,E)$ associated with Hilbert space ${\cal H}^{d\,|V|}$ is driven by a unitary operator the form of which is
\begin{equation}
	U \,=\, S\,(I\otimes C),
\end{equation}
where $C$ is an $d$-dimensional unitary operator (coin), $I$ is the $|V|$-dimensional identity operator, and $S$ is the shift operator which permutes the vectors of the computational basis of ${\cal H}^{d\,|V|}$, and $S^2=I$.
\end{definition}
We make three observations that complement the definition: First, the computational basis of ${\cal H}^{d\,|V|}$ is $\{\ket{v}\ket{j}:\,v\in V,\,0\le j<d\}$, and the action of the \textbf{shift operator} on a vector of the computational basis is
\begin{equation}\label{def_S}
	S \ket{v}\ket{j} = \ket{v'}\ket{j'}, \,\forall v\in V, \,0\le j<d,
\end{equation}
where vertices $v$ and $v'$ are adjacent. If the walker is on vertex $v$, direction $j$ points to $v'$ ($j$ is the label of the directed edge from $v$ to $v'$). In the flip-flop case $(S^2=I)$, we have $S \ket{v'}\ket{j'} = \ket{v}\ket{j}$, which means that if the walker is on vertex $v'$, direction $j'$ points back to $v$ ($j'$ is the label of the directed edge from $v'$ to $v$). Second, in $d$-regular graphs with chromatic index equal to $d$, it  is possible  to find a new shift operator $S'$ similar to $S$ such that $S' \ket{v}\ket{j} = \ket{v'}\ket{j}$, $\forall v\in V$, that is $S'$ does not change the coin value. In this case, if label $j$ points from $v$ to $v'$, the same label $j$ points from $v'$ to $v$. Third, for any discrete-time QW model, if $\ket{\psi_0}$ is the initial state, $U^t\ket{\psi_0}$ is the QW state at step $t$, where $t$ is a non-negative integer. It is interesting to avoid intermediate measurements to take full advantage of the quantum interference.

Definition~\ref{def:coinedQW} does not use the most general shift operator.  However,  the flip-flop shift operator seems to be the most interesting choice for two reasons: First, it is the one that provides the best speedup in spatial search algorithms~\cite{Ambainis:2005}. Second, alternate definitions employed in the literature use information that is external to the graph, such as, go to the right, left, up, or down for the two-dimensional lattice. It is not fair to compare such QW with classical random walks on the same graph because the latter do not use that kind of external information.

The extension of Definition~\ref{def:coinedQW} for non-regular graphs is obtained by noticing that $I\otimes C$ is a direct sum of $|V|$ $d$-dimensional matrices, all of them equal to $C$.
\begin{definition}\label{def:nonregularQW}
 The \textbf{non-regular flip-flop coined QW} on a graph $\Gamma(V,E)$ associated with Hilbert space ${\cal H}^{2\,|E|}$ is driven by a unitary operator the form of which is
\begin{equation}
	U \,=\, S\,C',
\end{equation}
where $C'$ is a direct sum of $|V|$ matrices with dimensions $d_1$, ..., $d_{|V|}$ so that $d_v$ is the degree of vertex $v$, and 
$S$ is the shift operator which permutes the vectors of the computational basis of ${\cal H}^{2\,|E|}$, and $S^2=I$.
\end{definition}
In the non-regular case, the computational basis of ${\cal H}^{2\,|E|}$ is $\{\ket{v,j}: v\in V, \,0\le j<d_v\}$, and the action of the shift operator on a vector of the computational basis is
\begin{equation}\label{def_S2}
	S \ket{v,j} = \ket{v',j'}, \,\forall v\in V, \,0\le j<d_v,
\end{equation}
where vertices $v$ and ${v'}$ are adjacent, label $j$ points from $v$ to ${v'}$, and label $j'$ points from ${v'}$ to $v$. Notice that $\ket{v,j}$ is a notation for the basis vectors that cannot be written as $\ket{v}\otimes\ket{j}$ unless the graph is regular. The order of the basis vectors must be consistent with the fact that $C'$ is a direct sum of $|V|$ matrices. The order is $\ket{v_1,0}$, ...,  $\ket{v_1,d_1-1}$, $\ket{v_2,0}$, ..., $\ket{v_2,d_2-1}$, etc. An example of non-regular flip-flop coined QW is given in Sec.~\ref{sec:GNR} with labels for coin directions and vertices in the graph $\Gamma$ of Fig.~\ref{fig:example6}.

If we associate two different coins (both $d$-dimensional matrices) with two distinct vertices of a $d$-regular graph, the QW is non-regular because $C'$ cannot be factorized and cast into the form $I\otimes C$.

A multigraph $\Gamma(V,E)$ is a generalization of the concept of a simple graph that allows multiple edges between two vertices. Graphs with loops can be added to this class. The results obtained using simple graph can be straightforwardly extended for multigraphs usually overburdening the notation. We will avoid the use of multigraphs whenever possible.

 Let us define an alternate QW model on a bipartite graph known as Szegedy's model~\cite{Szegedy:2004}. Consider a connected bipartite graph $\Gamma(X,Y,E)$, where $X,Y$ are disjoint sets of vertices and $E$ is the set of non-directed edges. Let 
\begin{equation}\label{biadmatrix}
		\left(\begin{array}[]{cc}
		  0 & A \\
			A^T & 0
	\end{array}\right)
\end{equation}
be the biadjacency matrix of $\Gamma(X,Y,E)$. Using $A$, define $P$ as a probabilistic map from $X$ to $Y$ with entries $p_{xy}$. Using $A^T$, define $Q$  as a probabilistic map from $Y$ to $X$ with entries $q_{yx}$. If $P$ is an $m\times n$ matrix, $Q$ will be an $n\times m$ matrix. Both are right-stochastic, that is, each row sums to 1. Using $P$ and $Q$, it is possible to define unit vectors
\begin{eqnarray}
  \ket{\phi_x} &=&  \sum_{y\in Y} \sqrt{p_{x y}}\,\textrm{e}^{i\theta_{xy}} \, \ket{x,y}, \label{ht_phi_x} \\
  \ket{\psi_y}  &=&  \sum_{x\in X} \sqrt{q_{y x}}\,\textrm{e}^{i\theta'_{xy}} \, \ket{x,y}, \label{ht_psi_y}
\end{eqnarray}
that have the following properties: $\braket{\phi_x}{\phi_{x'}}=\delta_{xx'}$ and $\braket{\psi_y}{\psi_{y'}}=\delta_{yy'}$. In Szegedy's original definition, $\theta_{xy}=\theta'_{xy}=0$. We call \textbf{extended Szegedy's QW} the version that allows nonzero angles.

\begin{definition}\label{def:SzegedyQW}
\textbf{Szegedy's QW} on a bipartite graph $\Gamma(X,Y,E)$ with biadjacent matrix (\ref{biadmatrix}) is defined on a Hilbert space ${\cal H}^{m n} = {\cal H}^{m}\otimes {\cal H}^{n} $, where $ m = | X |$ and $n = | Y | $, the computational basis of which is $ \big \{\ket {x, y}: x \in X, y \in Y \big \} $.
The QW is driven by the unitary operator
\begin{equation}\label{ht_U_ev}
    W \,=\, R_1 \, R_0,
\end{equation}
where
\begin{eqnarray}
  R_0 &=& 2\sum_{x\in X} \ket{\phi_x}\bra{\phi_x} - I, \label{ht_RA}\\
  R_1 &=& 2\sum_{y\in Y} \ket{\psi_y}\bra{\psi_y} - I. \label{ht_RB}
\end{eqnarray}
\end{definition}
Notice that operators $R_0$ and $R_1$ are unitary and Hermitian ($R_0^2=R_1^2=I$).

Let ${\cal H}^{|V|}$ be the Hilbert space associated with a graph $\Gamma(V,E)$, the vertices of which are labeled by the vectors of the computational basis. If $U$ is unitary and Hermitian in ${\cal H}^{|V|}$, it can be written as
\begin{equation}
	U \,=\,\sum_x \ket{\psi_x^+}\bra{\psi_x^+} - \sum_y \ket{\psi_y^-}\bra{\psi_y^-},
\end{equation}
where the set of vectors $\ket{\psi_x^+}$ is an orthonormal basis of the $(+1)$-eigenspace, and the set of vectors $\ket{\psi_y^-}$ is an orthonormal basis of the $(-1)$-eigenspace. Using that $\sum_x \ket{\psi_x^+}\bra{\psi_x^+} + \sum_y \ket{\psi_y^-}\bra{\psi_y^-} = I$, we obtain
\begin{equation}\label{U_or}
	U \,=\,2\sum_x\ket{\psi_x^+}\bra{\psi_x^+} - I.
\end{equation}
We want to define a special class of reflection operators $U$ associated with a graph $\Gamma(V,E)$ with the following properties: The $(+1)$-eigenvectors $\ket{\psi_x^+}$ must have non-overlapping nonzero entries, and the sum of those eigenvectors must have no zero entries in the orthonormal basis associated with the vertices of $\Gamma(V,E)$. Each vector $\ket{\psi_x^+}$ forms a clique in $\Gamma(V,E)$ because the vertices associated with nonzero entries of $\ket{\psi_x^+}$ for a fixed $x$ are adjacent. The union of all cliques must be an induced subgraph of $\Gamma(V,E)$. This subgraph is a disconnected union of cliques in general, except when $U$ has only one $(+1)$-eigenvector; in this case, the subgraph and $\Gamma(V,E)$ must be the complete graph.

\begin{definition}\label{def:orthrefl}
A unitary and Hermitian operator $U$ in ${\cal H}^{|V|}$ given by Eq.~(\ref{U_or}) is called an \textbf{orthogonal reflection} of a graph $\Gamma(V,E)$ if there is a complete orthonormal set of $(+1)$-eigenvectors $\ket{\psi_x^+}$ in the orthonormal basis associated with the vertices of the graph obeying the following properties: (1)~if the $i$-th entry of $\ket{\psi_x^+}$ for a fixed $x$ is nonzero, the $i$-th entries of the other $(+1)$-eigenvectors are zero, and (2)~vector $\sum_{x} \ket{\psi_x^+}$ has no zero entries.
\end{definition}
For example, the identity operator $I$ is an orthogonal reflection because the canonical computational basis $\big\{\ket{\psi_x^+}=\ket{x}:\,0\le x<|V|\big\}$ obeys properties (1) and (2). Those $(+1)$-eigenvectors will be used to define a tessellation. $(-I)$ is not an orthogonal reflection because no set of $(+1)$-eigenvectors obeys property (2).

\begin{definition}
A \textbf{polygon} of $\Gamma(V,E)$ induced by vector $\ket{\psi}\in {\cal H}^{|V|}$  is a clique. Two vertices of $\Gamma(V,E)$ are adjacent if the corresponding entries of $\ket{\psi}$ in the basis associated with $\Gamma(V,E)$ are non-zero. A vertex belongs to the polygon if and only if its corresponding entry in $\ket{\psi}$ is non-zero.  An edge belongs to a polygon if and only if the polygon contains the endpoints of the edge. 
\end{definition}

\begin{definition}
A \textbf{tessellation} induced by an orthogonal reflection $U$ of $\Gamma(V,E)$ is the union of the polygons induced by the $(+1)$-eigenvectors $\ket{\psi_x^+}$ of $U$ described in Definition~\ref{def:orthrefl}.
\end{definition}

One of the simplest examples of orthogonal reflection is the Grover operator $G=2\ket{\psi}\bra{\psi}-I$, where $\ket{\psi}$ is the normalized uniform superposition of the vectors of the computational basis~\cite{Portugal:book}. Notice that $G^2=I$, and $G$ has only one eigenvector with eigenvalue $(+1)$, which has no zero entries. The complete graph is induced by $G$ because $\ket{\psi}$ is a superposition of all vertices. If an orthogonal reflection is given, the induced graph can be straightforwardly obtained. If a graph $\Gamma(V,E)$ is given, an orthogonal reflection induces a tessellation of this graph if it contains all necessary edges; the cliques induced by the invariant eigenvectors must be induced subgraphs of $\Gamma(V,E)$. For example, Fig.~\ref{fig:example1} depicts an orthogonal reflection $U$, its corresponding graph $\Gamma_U$, and the induced tessellation in blue. In the general case, polygons of a tessellation do not overlap (property~(1) of Definition~\ref{def:orthrefl}) and a tessellation covers all vertices (property~(2) of Definition~\ref{def:orthrefl}). A tessellation does not need to cover all edges of a predefined graph, unless the graph is induced by an orthogonal reflection as the one in Fig.~\ref{fig:example1}. 
\noindent
\begin{figure}[h!]
\begin{minipage}{0.5\textwidth}
\scriptsize
\begin{equation*}
U\, =\,\frac{1}{3}
  \left[ \begin {array}{ccccc} -1&2&2&0&0\\ \noalign{\medskip}2&-1&2&0&0
     \\ \noalign{\medskip}2&2&-1&0&0\\ \noalign{\medskip}0&0&0&0&3
     \\ \noalign{\medskip}0&0&0&3&0\end {array} \right] 
\end{equation*}
\null
\par\xdef\tpd{\the\prevdepth}
\end{minipage}
\begin{minipage}{0.2\textwidth}
\tiny
\centering 
\def\svgwidth{4cm} 
\begingroup%
  \makeatletter%
  \providecommand\color[2][]{%
    \errmessage{(Inkscape) Color is used for the text in Inkscape, but the package 'color.sty' is not loaded}%
    \renewcommand\color[2][]{}%
  }%
  \providecommand\transparent[1]{%
    \errmessage{(Inkscape) Transparency is used (non-zero) for the text in Inkscape, but the package 'transparent.sty' is not loaded}%
    \renewcommand\transparent[1]{}%
  }%
  \providecommand\rotatebox[2]{#2}%
  \ifx\svgwidth\undefined%
    \setlength{\unitlength}{262.91592239bp}%
    \ifx\svgscale\undefined%
      \relax%
    \else%
      \setlength{\unitlength}{\unitlength * \real{\svgscale}}%
    \fi%
  \else%
    \setlength{\unitlength}{\svgwidth}%
  \fi%
  \global\let\svgwidth\undefined%
  \global\let\svgscale\undefined%
  \makeatother%
  \begin{picture}(1,0.81040071)%
    \put(-0.35222363,0.69291797){\color[rgb]{0,0,0}\makebox(0,0)[lb]{\smash{}}}%
    \put(0,0){\includegraphics[width=\unitlength,page=1]{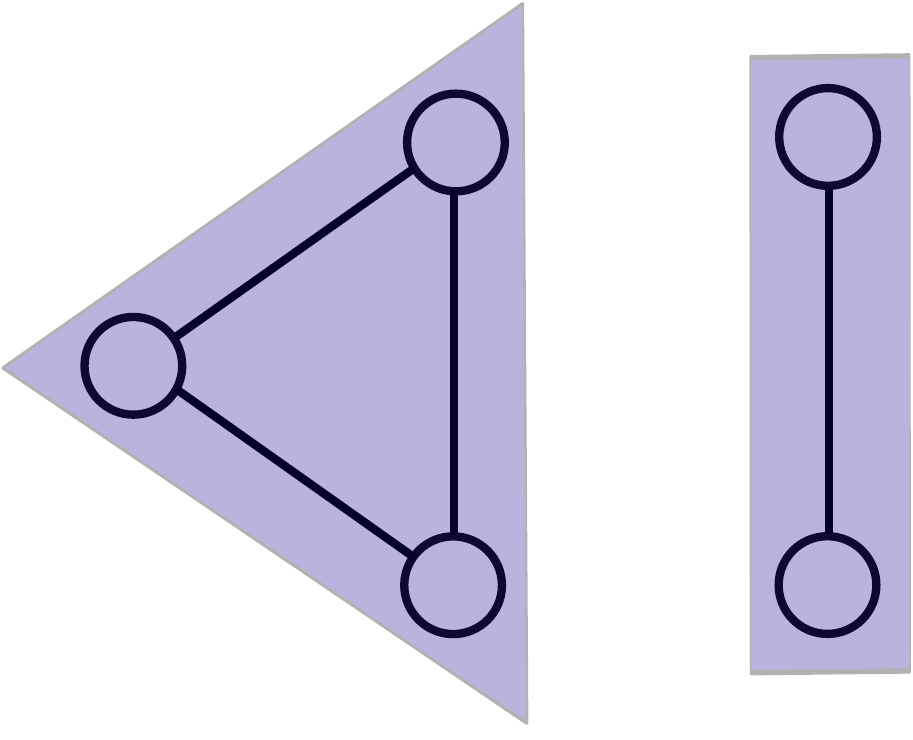}}%
    \put(0.65308426,-0.27072233){\color[rgb]{0,0,0}\transparent{0.08941177}\makebox(0,0)[b]{\smash{}}}%
    \put(0.23165444,0.09836121){\color[rgb]{0,0,0}\transparent{0.87199998}\makebox(0,0)[lt]{\begin{minipage}{0.31744995\unitlength}\centering \end{minipage}}}%
    \put(0.223942,0.3533441){\color[rgb]{0,0,0}\transparent{0.87199998}\makebox(0,0)[b]{\smash{}}}%
    \put(0.14663197,0.39890691){\color[rgb]{0,0,0}\transparent{0.87199998}\makebox(0,0)[b]{\smash{$\ket{0}$}}}%
    \put(0.49806281,0.15888791){\color[rgb]{0,0,0}\transparent{0.87199998}\makebox(0,0)[b]{\smash{$\ket{1}$}}}%
    \put(0.70923404,0.00821294){\color[rgb]{0,0,0}\transparent{0.87199998}\makebox(0,0)[b]{\smash{$\Gamma_U$}}}%
    \put(0.66780108,-0.12655483){\color[rgb]{0,0,0}\transparent{0.87199998}\makebox(0,0)[lt]{\begin{minipage}{0.15743721\unitlength}\centering \end{minipage}}}%
    \put(0.14925315,-0.30303302){\color[rgb]{0,0,0}\transparent{0.87199998}\makebox(0,0)[lt]{\begin{minipage}{0.17704681\unitlength}\centering \end{minipage}}}%
    \put(-0.09217436,0.16640936){\color[rgb]{0,0,0}\transparent{0.87199998}\makebox(0,0)[b]{\smash{}}}%
    \put(0.50128608,0.64020303){\color[rgb]{0,0,0}\transparent{0.87199998}\makebox(0,0)[b]{\smash{$\ket{2}$}}}%
    \put(0.90867493,0.64910376){\color[rgb]{0,0,0}\transparent{0.87199998}\makebox(0,0)[b]{\smash{$\ket{3}$}}}%
    \put(0.9064228,0.15778046){\color[rgb]{0,0,0}\transparent{0.87199998}\makebox(0,0)[b]{\smash{$\ket{4}$}}}%
  \end{picture}%
\endgroup%
\end{minipage}
\caption{Example of an orthogonal reflection $U$ with $(+1)$-eigenvectors $(\ket{0}+\ket{1}+\ket{2})/{\sqrt 3}$ and $(\ket{3}+\ket{4})/{\sqrt 2}$. The corresponding induced subgraph $\Gamma_U$ and its tessellation are depicted on the right-hand side.} 
\label{fig:example1}
\end{figure}

\begin{definition}
The \textbf{staggered QW} on a graph $\Gamma(V,E)$ associated with Hilbert space ${\cal H}^{|V|}$ is driven by 
\begin{equation}
	U \,=\, U_1\,U_0,
\end{equation}
where $U_0$ and $U_1$ are orthogonal reflections of $\Gamma(V,E)$. The union of the tessellations induced by $U_0$ and $U_1$ must cover the edges of $\Gamma(V,E)$.
\end{definition}
The above definition can be readily extended by allowing $U$ to be a product of three or more orthogonal reflections. In this work we focus on the product of only two orthogonal reflections. Ref.~\cite{PSFG15} showed that all Szegedy's QWs are instances of the staggered QW model. In fact, a staggered QW is equivalent to a Szegedy's QW if and only if the intersection of the tessellations induced by $U_0$ and $U_1$ does not contain any edge of  $\Gamma(V,E)$.

To use the staggered QW model for searching marked vertices, we have to use partial tessellations, which can be formally defined by using the notion of partial orthogonal reflection.
\begin{definition}
A unitary and Hermitian operator $U$ in ${\cal H}^{|V|}$ is called a \textbf{partial orthogonal reflection} of a graph $\Gamma(V,E)$ if there is a complete orthonormal set of $(+1)$-eigenvectors $\ket{\psi_x^+}$ in the orthonormal basis associated with the vertices of the graph obeying property~(1) and violating property~(2) of Definition~\ref{def:orthrefl}. 
\end{definition}
A partial orthogonal reflection $U$ induces a \textbf{partial tessellation}, which does not contain all vertices of the graph. The most radical example of a partial orthogonal reflection is the minus identity operator $(-I)\in {\cal H}^N$ because it has no $(+1)$-eigenvectors. The graph induced by this partial orthogonal reflection has $N$ disconnected vertices (the empty $N$-graph). If an $N$-graph is given, a partial orthogonal reflection defines a partial tessellation, which is a tessellation with missing polygons. $(-I)$ induces a partial tessellation with no polygons at all. In the staggered QW model, vertices that do not belong to the intersection of polygons of all tessellations are the marked ones.

\begin{definition}
The \textbf{generalized staggered QW} on a graph $\Gamma(V,E)$ associated with Hilbert space ${\cal H}^{|V|}$ is driven by 
\begin{equation}
	U \,=\, U_1\,U_0,
\end{equation}
where $U_0$ is a partial orthogonal reflection and $U_1$ is an orthogonal reflection of $\Gamma(V,E)$.  The union of the tessellations induced by $U_0$ and $U_1$ must cover the vertices of $\Gamma(V,E)$.
\end{definition}
Again, the above definition can be extended by allowing $U$ to be the product of more (partial) orthogonal reflections.

\section{Results for Regular Graphs}\label{sec:MR}

\begin{theorem}\label{theo1}
A standard flip-flop coined QW on a $d$-regular $N$-graph $\Gamma(V,E)$, such that the coin $C$ is an orthogonal reflection, can be cast into the extended Szegedy's QW model.
\end{theorem}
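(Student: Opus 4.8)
The plan is to exhibit the coined walk $U = S\,(I\otimes C)$ directly as a product $R_1 R_0$ of two Szegedy reflections, by setting $R_0 := I\otimes C$ and $R_1 := S$. Since $C$ is an orthogonal reflection it is unitary and Hermitian, hence so is $I\otimes C$; and $S$ is a permutation with $S^2=I$, so it is unitary, real symmetric, and therefore Hermitian. With this identification $R_1 R_0 = S\,(I\otimes C) = U$ holds automatically, and the whole content of the theorem collapses to showing that the spectral data of $I\otimes C$ and of $S$ organize into a genuine extended Szegedy structure on a suitable bipartite graph. First I would write $C = 2\sum_k \ket{\chi_k^+}\bra{\chi_k^+} - I$ as in Definition~\ref{def:orthrefl}, where the $(+1)$-eigenvectors $\ket{\chi_k^+}$ of $C$ have pairwise non-overlapping support and $\sum_k\ket{\chi_k^+}$ has no zero entry.

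Next I would build the bipartite graph on the arc set. The computational basis of ${\cal H}^{d\,|V|}$ consists of the arcs $\ket{v}\ket{j}$, and I would place these in bijection with the edges of the bipartite graph. Let $X$ index the pairs $(v,k)$, one clique per vertex $v$ and per $(+1)$-eigenvector of $C$, and let $Y$ index the flip-flop arc-pairs $\{(v,j),(v',j')\}$ produced by $S$, i.e. the edges of $\Gamma$. Each arc $\ket{v}\ket{j}$ lies in exactly one $X$-clique $x(v,j)$, because the $\ket{\chi_k^+}$ have disjoint supports and property~(2) forces every direction into some clique, and in exactly one $Y$-pair $y(v,j)$, because $S$ partitions the arcs into reverse-arc pairs (here I use that on a simple graph $v\neq v'$, so $S$ has no fixed arc). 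The assignment $\ket{v}\ket{j}\mapsto \ket{x(v,j),y(v,j)}$ identifies arcs with bipartite edges, and I would then read off $\ket{\phi_x} := \ket{v}\ket{\chi_k^+}$ and $\ket{\psi_y} := \big(\ket{v}\ket{j}+\ket{v'}\ket{j'}\big)/\sqrt2$, so that $R_0 = 2\sum_x\ket{\phi_x}\bra{\phi_x}-I$ and $R_1 = 2\sum_y\ket{\psi_y}\bra{\psi_y}-I$, exactly as in Eqs.~(\ref{ht_RA})--(\ref{ht_RB}).

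I would then check the Szegedy conditions. Orthonormality $\braket{\phi_x}{\phi_{x'}}=\delta_{xx'}$ and $\braket{\psi_y}{\psi_{y'}}=\delta_{yy'}$ follows from the disjoint-support property together with unit norm; completeness in the two $(+1)$-eigenspaces reproduces $R_0$ and $R_1$. Comparing $\ket{\phi_x}=\sum_{y}\sqrt{p_{xy}}\,\textrm{e}^{i\theta_{xy}}\ket{x,y}$ with $\ket{v}\ket{\chi_k^+}$ gives $p_{xy}=|(\chi_k^+)_j|^2$ and $\theta_{xy}=\arg(\chi_k^+)_j$, while unit norm of $\ket{\chi_k^+}$ makes each row of $P$ sum to $1$, so $P$ is right-stochastic; for $\ket{\psi_y}$ one gets $q_{yx}=1/2$ with $\theta'_{xy}=0$, which is right-stochastic as well. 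Thus the pair $(P,Q)$ together with the phases realizes the vectors of Eqs.~(\ref{ht_phi_x})--(\ref{ht_psi_y}).

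The one genuinely essential point, and the reason the statement must invoke the \emph{extended} model, is that the entries $(\chi_k^+)_j$ of the coin eigenvectors are in general signed or complex, as already for the Grover and Hadamard coins, so the phases $\theta_{xy}=\arg(\chi_k^+)_j$ cannot be removed and Szegedy's original $\theta_{xy}=\theta'_{xy}=0$ convention would not suffice. The remaining issue is bookkeeping rather than a real obstacle: Szegedy's full space ${\cal H}^{mn}$ is larger than ${\cal H}^{d\,|V|}$, and I would verify that $W$ acts as the identity off the edge subspace (a basis state $\ket{x,y}$ that is not the image of an arc is orthogonal to every $\ket{\phi_{x'}}$ and every $\ket{\psi_{y'}}$, hence satisfies $R_0\ket{x,y}=R_1\ket{x,y}=-\ket{x,y}$ and $W\ket{x,y}=\ket{x,y}$), so that the restriction of $W$ to the $d\,|V|$-dimensional edge subspace is exactly $U$. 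Equivalently, one can route the argument through the staggered model: the construction above shows that $S$ and $I\otimes C$ are orthogonal reflections whose polygons are the arc-pairs and the coin-cliques, respectively, and since an arc-pair polygon meets any coin-clique polygon in at most one arc, the two tessellations share no edge; the criterion of Ref.~\cite{PSFG15} then yields that the staggered walk $U=U_1U_0$ is equivalent to a Szegedy's QW.
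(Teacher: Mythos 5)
Your proof is correct, and the underlying construction is the same one the paper uses: your $X$-cliques $(v,k)$ and your $Y$-pairs $\{(v,j),(v',j')\}$ are precisely the blue tessellation induced by the vectors $\ket{v,\alpha_x}$ and the red perfect matching induced by the vectors $\ket{\beta_{v,j}^+}$ that the paper places on the enlarged graph $\Gamma'$, and your identification of arcs with bipartite edges is the line-graph bijection of Ref.~\cite{PSFG15}. The difference is in how the final step is discharged. The paper stops once it has a well-defined staggered walk whose two tessellations intersect in single vertices and then cites Proposition~4.3 of Ref.~\cite{PSFG15}; you instead construct the Szegedy data explicitly --- $p_{xy}=|(\chi_k^+)_j|^2$, $q_{yx}=1/2$, $\theta_{xy}=\arg(\chi_k^+)_j$, $\theta'_{xy}=0$ --- check stochasticity and orthonormality, and handle the idle subspace by hand (non-edge basis states $\ket{x,y}$ are $(-1)$-eigenvectors of both $R_0$ and $R_1$, hence fixed points of $W$), so that $W$ restricted to the edge subspace equals $U$. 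This makes your argument self-contained and pinpoints exactly where the \emph{extended} model enters, whereas the paper's route is shorter and reuses machinery that also serves Theorems~\ref{theo3}--\ref{theo6}; note that the paper only alludes to the idle-subspace issue at the end of the proof of Theorem~\ref{theo2}, so your explicit treatment of it is a genuine gain in completeness.

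One factual slip in a side remark: the $(+1)$-eigenvectors of the Grover coin (the uniform superposition) and of the Hadamard coin (proportional to $\ket{0}+(\sqrt{2}-1)\ket{1}$) have strictly positive entries, so for those two coins the phases can be taken to be zero, as the paper indeed does in its Sec.~\ref{sec:GNR} example. Nonzero $\theta_{xy}$ is forced only for orthogonal reflections whose invariant eigenvectors carry negative or complex entries. This does not affect the validity of your proof, since you only need that the phases are available in the extended model, not that they are nonzero for those particular coins.
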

\begin{proof}
We start by obtaining a staggered QW with two tessellations equivalent to the standard flip-flop coined QW. If $C$ is an orthogonal reflection, then
\begin{eqnarray}
  C &=& 2\sum_{x=0}^{m-1} \ket{\alpha_x}\bra{\alpha_x} - I, \label{C}
\end{eqnarray}
where $\ket{\alpha_0},...,\ket{\alpha_{m-1}}$ is an orthonormal basis for the invariant eigenspace of $C$ with the following properties: (1)~if the $i$-th entry of $\ket{\alpha_x}$ is nonzero, the $i$-th entries of the other $(+1)$-eigenvectors must be zero, and (2)~vector $\sum_{x=0}^{m-1} \ket{\alpha_x}$ has no zero entries. $C$ has an associated $d$-graph $\Gamma_C$, which is a union of $m$ disjoint cliques. The labels of the vertices of $\Gamma_C$ are the coin values.  

Let $\Gamma'(V',E')$ be the graph obtained from $\Gamma(V,E)$ by replacing each vertex $v\in V$ by graph $\Gamma_C$. In the gluing process, a vertex of $\Gamma_C$ with label $j$ is linked by edge $j$ of $\Gamma(V,E)$  ($j$ is the coin direction as in (\ref{def_S})) and receives label $(v,j)$ as a new vertex in $\Gamma'(V',E')$. Fig.~\ref{fig:example2} shows how to obtain $\Gamma'(V',E')$ from the two-dimensional lattice when the coin is the Grover operator. In this example each vertex is replaced by a $4$-clique and two different cliques can have at most one common edge because the two-dimensional lattice is a simple graph.
\begin{figure}[h!] 
\centering
\includegraphics[scale=0.55]{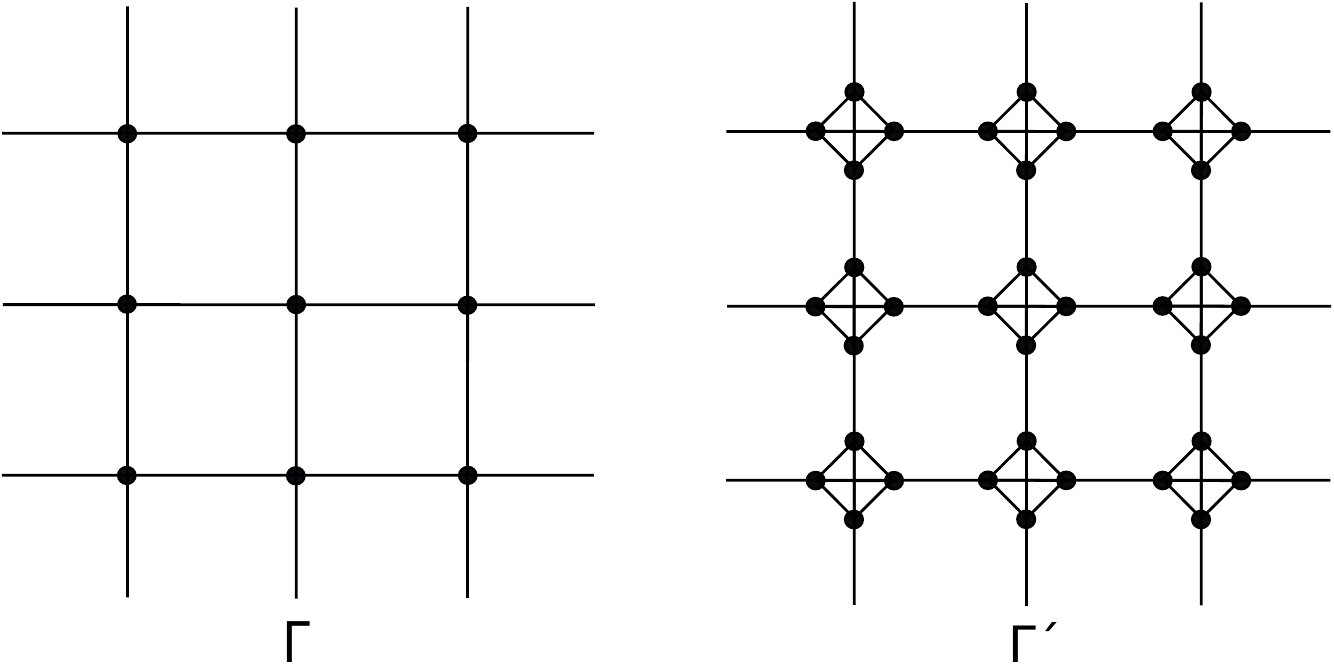}
\caption{A standard flip-flop coined QW with the four-dimensional Grover coin on the two-dimensional lattice $\Gamma$ depicted on the left-hand side is equivalent to a staggered QW on graph $\Gamma'$ on the right-hand side (the vertices are the black circles). Each vertex of the lattice $\Gamma$ is converted into a 4-clique of $\Gamma'$.} 
\label{fig:example2}
\end{figure}

Define
\begin{eqnarray}
  C' &=& I_N\otimes C\nonumber\\
   &=& 2\sum_{v=0}^{N-1}\sum_{x=0}^{m-1} \ket{v,\alpha_x}\bra{v,\alpha_x} - I. 
\end{eqnarray}
Polygons induced by $\ket{v,\alpha_x},\,\forall v,x$ tessellate $\Gamma'(V',E')$ because each polygon induced by $\ket{v,\alpha_x}$ covers exactly graph $\Gamma_C$ that replaces vertex $v$, and the union of polygons induced by $\ket{v,\alpha_x}$ covers all vertices. Fig.~\ref{fig:example3} shows polygons induced by $\ket{v,\alpha_x}$ in blue for the two-dimensional lattice. By analyzing the non-zero entries of vectors $\ket{v,\alpha_x}$, we can verify that $C'$ is an orthogonal reflection of $\Gamma'(V',E')$, which is another way to verify that the set of vectors $\ket{v,\alpha_x}$ induces a tessellation.

\begin{figure}[h!] 
\centering
\includegraphics[scale=0.78]{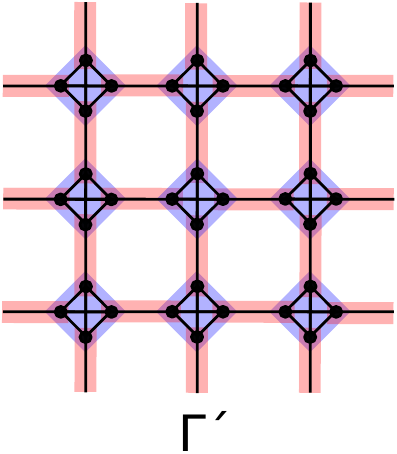}
\caption{Tessellations of a staggered QW equivalent to a standard flip-flop coined QW. Blue polygons are induced by vectors $\ket{v,\alpha_x}$ and define $I\otimes C$;
red polygons are induced by an independent set of vectors $\ket{\beta_{v,j}^+}$ and define $S$. Notice that the red tessellation is a perfect matching; it covers all vertices and has only one edge in each polygon.} 
\label{fig:example3}
\end{figure}

The second tessellation is obtained using the $(+1)$-eigenvectors of $S$, the set of which is a perfect matching of $\Gamma'(V',E')$ as we now show. Using Eq.~(\ref{def_S}), it is straightforward to verify that, for any $v$ and $j$, vectors
\begin{eqnarray}
  \ket{\beta_{v,j}^+} &=& \frac{1}{\sqrt 2}\left(\ket{v}\ket{j}+\ket{v'}\ket{j'}\right),\\
  \ket{\beta_{v,j}^-} &=& \frac{1}{\sqrt 2}\left(\ket{v}\ket{j}-\ket{v'}\ket{j'}\right), 
\end{eqnarray}
are eigenvectors of $S$ with eigenvalues $(+1)$ and $(-1)$, respectively. Since there are $dN/2$ independent eigenvectors associated with each eigenvalue, it follows that
\begin{equation}
	S\,=\, 2 \sum \ket{\beta_{v,j}^+}\bra{\beta_{v,j}^+}-I
\end{equation}
where the sum runs over the set of independent $(+1)$-eigenvectors (the sum has $dN/2$ terms). $S$ is an orthogonal reflection because the set of independent $(+1)$-eigenvectors has non-overlapping nonzero entries, and the sum of those eigenvectors has no zero entries in the computation basis of $\Gamma'(V',E')$. Polygons induced by $\ket{\beta_{v,j}^+}$ cover all vertices and form a perfect matching, which defines a second tessellation of $\Gamma'(V',E')$. Fig.~\ref{fig:example3} shows polygons $\ket{\beta_{v,j}^+}$ in red for the two-dimensional lattice.

The union of tessellations $\ket{v,\alpha_x}$ and $\ket{\beta_{v,j}^+}$ covers all edges and is a well-defined staggered QW having one vertex in each polygon intersection. Using Proposition~4.3 of Ref.~\cite{PSFG15}, this staggered QW can be cast into the extended Szegedy's framework.\qed
\end{proof}

Theorem~\ref{theo1} has a converse, which we state as a new theorem.

\begin{theorem}\label{theo2}
Let $\Gamma(X,Y,E)$ be a biregular bipartite graph such that $\deg(x)=d$, $\forall x\in X$ and $\deg(y)=2$, $\forall y \in Y$. Suppose that if one eliminates the zeros of the sequence $p_{x0},p_{x1},p_{x2},...$ then one gets the same sequence $c_0,c_1,...,c_{d-1}$, for all $x\in X$. Suppose also that $q_{yx}$ is either $1/2$ or $0$. Then Szegedy's QW on $\Gamma(X,Y,E)$ is equivalent to a standard flip-flop coined QW on a $d$-regular $|X|$-multigraph. 
\end{theorem}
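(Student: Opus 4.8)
The plan is to run the construction of Theorem~\ref{theo1} in reverse: instead of manufacturing two tessellations from a coined walk, I will show that the two reflections $R_0$ and $R_1$ that already define Szegedy's operator $W=R_1R_0$ are, after the right identification of the Hilbert space, literally of the form $I\otimes C$ and $S$. The first move is to set up that identification. Because $\deg(x)=d$, the vector $\ket{\phi_x}$ is supported on exactly the $d$ edge-states $\ket{x,y}$ with $y\sim x$, and the effective Szegedy space is spanned by the edge-states $\{\ket{x,y}:(x,y)\in E\}$. Labelling the $d$ neighbours of each $x$ by coin directions $0\le j<d$ lets me write $\ket{x,y}=\ket{x}\ket{j}$ and thereby identify this space with ${\cal H}^{|X|}\otimes{\cal H}^{d}$, which is the arena of Definition~\ref{def:coinedQW}.

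Next I would factor $R_0$. Using the hypothesis that deleting the zeros of each row of $P$ returns the same sequence $c_0,\dots,c_{d-1}$, the above labelling gives $\ket{\phi_x}=\ket{x}\otimes\ket{\alpha}$ with $\ket{\alpha}=\sum_{j=0}^{d-1}\sqrt{c_j}\,\ket{j}$ \emph{independent of} $x$. Hence $\ket{\phi_x}\bra{\phi_x}=\ket{x}\bra{x}\otimes\ket{\alpha}\bra{\alpha}$ and $R_0=I\otimes\big(2\ket{\alpha}\bra{\alpha}-I\big)=I\otimes C$. Since $C$ has the single $(+1)$-eigenvector $\ket{\alpha}$, whose entries $\sqrt{c_j}$ are all nonzero, $C$ trivially satisfies properties~(1) and~(2) of Definition~\ref{def:orthrefl} and is therefore an orthogonal reflection (a generalized Grover coin).

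Then I would identify $R_1$ with a flip-flop shift. Because $\deg(y)=2$ and $q_{yx}\in\{0,1/2\}$, every $\ket{\psi_y}$ is the balanced superposition $\tfrac{1}{\sqrt2}\big(\ket{x_1,j_1}+\ket{x_2,j_2}\big)$ of the two edge-states meeting at $y$, and a one-line computation shows that $2\ket{\psi_y}\bra{\psi_y}-I$ acts as the transposition $\ket{x_1,j_1}\leftrightarrow\ket{x_2,j_2}$. As $y$ ranges over $Y$ these two-dimensional subspaces are mutually orthogonal (each edge-state lies on a unique $y$) and exhaust the basis, so $R_1$ permutes the computational basis and satisfies $R_1^2=I$; it is exactly a flip-flop shift $S$. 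To close the argument I would declare the target graph to have vertex set $X$ and edge set $Y$, each $y$ joining its two $X$-neighbours (parallel $y$ over the same pair forcing a multigraph); each $x$ is incident to $d$ of the $y$'s, so the multigraph is $d$-regular on $|X|$ vertices, and $W=R_1R_0=S\,(I\otimes C)$ is precisely the standard flip-flop coined evolution with orthogonal-reflection coin $C$.

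I expect the main obstacle to be the phases $\theta_{xy}$ of the extended model: the tensor factorization in the second step is exactly where the uniform-row hypothesis on $P$ and the choice $\theta_{xy}=0$ are indispensable, since nonzero phases make $\ket{\alpha}$ acquire an $x$-dependence and turn $R_0$ into a vertex-dependent direct sum of coins, i.e. a \emph{non-regular} walk (Definition~\ref{def:nonregularQW}) rather than $I\otimes C$. A secondary, purely bookkeeping point is to choose the direction labels $j_1,j_2$ consistently so that the transposition in the third step matches the flip-flop convention $S\ket{x_1}\ket{j_1}=\ket{x_2}\ket{j_2}$ of Eq.~(\ref{def_S}).
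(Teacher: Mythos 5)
Your computational core coincides with the paper's proof: the identification $\ket{x,y}\leftrightarrow\ket{x}\ket{j}$, the factorization of the first reflection as $I\otimes C$ with $C=2\ket{\alpha}\bra{\alpha}-I$ forced by the uniform-row hypothesis on $P$, the recognition of the second reflection as a flip-flop shift from $\deg(y)=2$ and $q_{yx}\in\{0,1/2\}$, and the target multigraph with vertex set $X$ and one edge per element of $Y$. Where you genuinely differ is in how the ambient space is handled. The paper never works with $R_0$ and $R_1$ on the full space ${\cal H}^{|X|}\otimes{\cal H}^{|Y|}$: it first passes to the equivalent staggered walk on the line graph $L(\Gamma)$ (invoking Ref.~\cite{PSFG15}), whose Hilbert space ${\cal H}^{d|X|}$ is by construction spanned only by edge states, precisely because --- as the paper notes at the end of its proof --- the idle subspace spanned by non-edge pairs $\ket{x,y}$, $(x,y)\notin E$, obstructs writing $R_0$ as $I\otimes C$ (on the full space $R_0$ is a direct sum of $x$-dependent blocks, not a tensor product). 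You instead restrict directly to the edge-state span inside Szegedy's space, which is more self-contained but is exactly the point your writeup asserts rather than proves: you should verify that this span is invariant under both $R_0$ and $R_1$ (immediate, since each is $-I$ plus twice a projector onto vectors supported on edge states), and that on its orthogonal complement both reflections act as $-I$, so that $W$ acts as the identity there. With that one-line check, ``equivalent'' carries the same meaning as in the paper (compare the remark in Sec.~\ref{sec:GNR} that the non-trivial part of $W$ equals the coined evolution operator, $W$ living in a larger space), and your argument is complete; it trades the paper's reliance on the line-graph machinery of Ref.~\cite{PSFG15} for a direct, elementary reduction, at the cost of having to make the idle-subspace bookkeeping explicit yourself. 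Your closing observation that nonzero phases $\theta_{xy}$ or non-uniform rows of $P$ would only break the tensor factorization, demoting the walk to a non-regular coined walk, is also consistent with the paper's Theorem~\ref{theo4}.
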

\begin{proof}
Consider the staggered QW model on the line graph $L(\Gamma)$ equivalent to Szegedy's QW on $\Gamma(X,Y,E)$~\cite{PSFG15}. $L(\Gamma)$ has $d|X|=2|Y|$ vertices. The polygons of the staggered model are induced by 
\begin{eqnarray}
  \ket{\alpha_x} &=&  \sum_{y\in Y} \sqrt{p_{x y}} \, \ket{f(x,y)}, \label{ht_alpha_x} \\
  \ket{\beta_y}  &=&  \sum_{x\in X} \sqrt{q_{y x}}\, \ket{f(x,y)}, \label{ht_beta_y}
\end{eqnarray} 
where $f$ is the bijection between $E$ and the vertices of $L(\Gamma)$ as described if Ref.~\cite{PSFG15}; and vectors $\ket{\alpha_x},\ket{\beta_x}$ belong to Hilbert space ${\cal H}^{d|X|}$. Using the edge labels described in Fig.~\ref{fig:example5}, vectors $\ket{\alpha_x}$ are given by
\begin{eqnarray}
  \ket{\alpha_x} &=&  \sum_{j=0}^{d-1} \sqrt{c_j} \, \ket{dx+j} \nonumber \\
  &=&   \sum_{j=0}^{d-1} \sqrt{c_j} \, \ket{x}\ket{j}, 
\end{eqnarray}
where vectors $\ket{x}\ket{j}$ belong to the computational basis of Hilbert space ${\cal H}^{|X|}\otimes{\cal H^d}$.

\begin{figure}[h!] 
\centering 
\def\svgwidth{8cm} 
\begingroup%
  \makeatletter%
  \providecommand\color[2][]{%
    \errmessage{(Inkscape) Color is used for the text in Inkscape, but the package 'color.sty' is not loaded}%
    \renewcommand\color[2][]{}%
  }%
  \providecommand\transparent[1]{%
    \errmessage{(Inkscape) Transparency is used (non-zero) for the text in Inkscape, but the package 'transparent.sty' is not loaded}%
    \renewcommand\transparent[1]{}%
  }%
  \providecommand\rotatebox[2]{#2}%
  \ifx\svgwidth\undefined%
    \setlength{\unitlength}{149.14814107bp}%
    \ifx\svgscale\undefined%
      \relax%
    \else%
      \setlength{\unitlength}{\unitlength * \real{\svgscale}}%
    \fi%
  \else%
    \setlength{\unitlength}{\svgwidth}%
  \fi%
  \global\let\svgwidth\undefined%
  \global\let\svgscale\undefined%
  \makeatother%
  \begin{picture}(1,0.56099056)%
    \put(0.37004657,-0.9025818){\color[rgb]{0,0,0}\makebox(0,0)[lb]{\smash{}}}%
    \put(0,0){\includegraphics[width=\unitlength,page=1]{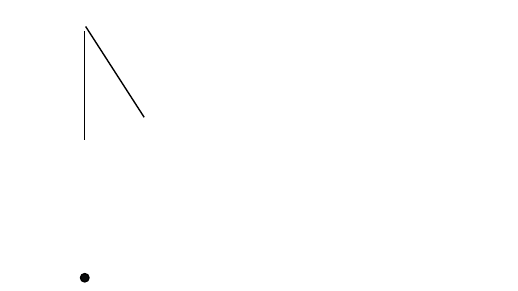}}%
    \put(0.18342718,0.38814226){\color[rgb]{1,0,0}\makebox(0,0)[b]{\smash{}}}%
    \put(0.18398192,0.40201147){\color[rgb]{1,0,0}\makebox(0,0)[b]{\smash{}}}%
    \put(0.21022548,0.33777228){\color[rgb]{0,0,0}\rotatebox{16.44799696}{\makebox(0,0)[b]{\smash{...}}}}%
    \put(0.21782287,0.46580982){\color[rgb]{0,0,0}\makebox(0,0)[b]{\smash{}}}%
    \put(0.29243098,0.41218431){\color[rgb]{0,0,0}\makebox(0,0)[b]{\smash{$d-1$}}}%
    \put(0.13548745,0.35121895){\color[rgb]{0,0,0}\makebox(0,0)[b]{\smash{$0$}}}%
    \put(0,0){\includegraphics[width=\unitlength,page=2]{stag_graph07.pdf}}%
    \put(0.47926331,0.3348846){\color[rgb]{0,0,0}\rotatebox{16.44799696}{\makebox(0,0)[b]{\smash{...}}}}%
    \put(0.56146961,0.40929649){\color[rgb]{0,0,0}\makebox(0,0)[b]{\smash{$2d-1$}}}%
    \put(0.40687977,0.34911569){\color[rgb]{0,0,0}\makebox(0,0)[b]{\smash{$d$}}}%
    \put(0,0){\includegraphics[width=\unitlength,page=3]{stag_graph07.pdf}}%
    \put(0.74881018,0.33309153){\color[rgb]{0,0,0}\rotatebox{16.44799696}{\makebox(0,0)[b]{\smash{...}}}}%
    \put(0.83023259,0.40750376){\color[rgb]{0,0,0}\makebox(0,0)[b]{\smash{$3d-1$}}}%
    \put(0.66177689,0.34365638){\color[rgb]{0,0,0}\makebox(0,0)[b]{\smash{$2d$}}}%
    \put(0.96644339,0.49728305){\color[rgb]{0,0,0}\rotatebox{-0.32181318}{\makebox(0,0)[b]{\smash{...}}}}%
    \put(0.96565845,0.02462645){\color[rgb]{0,0,0}\rotatebox{-0.32181318}{\makebox(0,0)[b]{\smash{...}}}}%
    \put(0,0){\includegraphics[width=\unitlength,page=4]{stag_graph07.pdf}}%
    \put(0.0503575,0.01168443){\color[rgb]{0,0,0}\makebox(0,0)[b]{\smash{$Y$}}}%
    \put(0.0518867,0.50097543){\color[rgb]{0,0,0}\makebox(0,0)[b]{\smash{$X$}}}%
    \put(0.16322145,0.52118214){\color[rgb]{0,0,0}\makebox(0,0)[b]{\smash{$x=0$}}}%
    \put(0.43235567,0.52216084){\color[rgb]{0,0,0}\makebox(0,0)[b]{\smash{$x=1$}}}%
    \put(0.69994826,0.52246744){\color[rgb]{0,0,0}\makebox(0,0)[b]{\smash{$x=2$}}}%
    \put(0,0){\includegraphics[width=\unitlength,page=5]{stag_graph07.pdf}}%
  \end{picture}%
\endgroup%
\caption{Description of the edge labels of the bipartite graph $\Gamma(X,Y,E)$, where $\deg(x)=d$, $\forall x\in X$; $\deg(y)=2$, $\forall y \in Y$. } 
\label{fig:example5}
\end{figure}

Then
\begin{eqnarray}
  U_0 &=&  2\sum_{x\in X} \ket{\alpha_x}\bra{\alpha_x} - I
  \,\,=\,\,   I\otimes C, 
\end{eqnarray}
where
\begin{equation}\label{coin_C_Sz}
C\,=\, 2\ket{\psi}\bra{\psi}-I
\end{equation}
and
\begin{equation}
\ket{\psi}\,=\, \sum_{j=0}^{d-1} \sqrt{c_j} \, \ket{j}.
\end{equation}
Notice that $C\in {\cal H}^d$ is an orthogonal reflection because $C$ has only one eigenvector with eigenvalue $(+1)$ and this eigenvector has no zero entries. The graph induced by $C$ is a $d$-clique.

On the other hand, vectors $\ket{\beta_y}\in {\cal H}^{2|Y|}$ have only two terms
\begin{equation}
\ket{\beta_y}\,=\, \frac{1}{\sqrt 2}\left(\ket{f(x_1,y)}+\ket{f(x_2,y)}\right),
\end{equation}
where $x_1,x_2$ are the neighbors of $y$ ($x_1,x_2$ depend on $y$). Then
\begin{eqnarray}\label{U_1_theo2}
  U_1 &=&  2\sum_{y\in Y} \ket{\beta_y}\bra{\beta_y}- I\nonumber\\
  &=&  \sum_{y\in Y} \,\,\ket{f(x_1,y)}\bra{f(x_2,y)}+\ket{f(x_2,y)}\bra{f(x_1,y)}. \label{U1_theo2}
\end{eqnarray}
$U_1$ is a flip-flop shift operator because $U_1$ commutes basis vectors and $U_1^2=I$. The shift can be understood in the following way: When we convert $\ket{dx_1+j_1}$ into $\ket{x_1}\ket{j_1}$, the interpretation of applying $U_1$ on $\ket{x_1}\ket{j_1}$ is that the walker moves from position $x_1$ in the direction $j_1$ reaching $y$, reflects at $y$, and moves to $x_2$. The state of the walker will be $\ket{x_2}\ket{j_2}$, where $j_2$ points to the same $y$ from $x_2$. Applying $U_1$ on $\ket{x_2}\ket{j_2}$ yields $\ket{x_1}\ket{j_1}$. This inversion of direction characterizes the flip-flop shift operator.

The evolution operator is
\begin{equation}
U \,=\, U_1 \, (I\otimes C),
\end{equation}
where $C$ is the coin operator given by Eq.~(\ref{coin_C_Sz}) and $U_1$ is the flip-flop shift operator given by Eq.~(\ref{U1_theo2}).
Now we have to specify the graph on which the coined QW evolves. The polygons of tessellation $\alpha$ are $d$-cliques, and the polygons of tessellation $\beta$ have two vertices and form a perfect matching of $L(\Gamma)$. Each $d$-clique of $L(\Gamma)$ must be converted into a single vertex. If two $d$-cliques are connected by an edge, the vertices that replace those cliques are adjacent.  If two $d$-cliques are connected by more than one edge, the vertices that replace those cliques must be connected by more than one edge generating a $d$-regular $|X|$-multigraph. Fig.~\ref{fig:example4} shows an example of a bipartite graph $\Gamma$ on which the Szegedy's QW takes place and the multigraph $\Gamma'$ on which an equivalent standard flip-flop coined QW takes place. 
\begin{figure}[h!] 
\tiny
\centering 
\def\svgwidth{15cm} 
\begingroup%
  \makeatletter%
  \providecommand\color[2][]{%
    \errmessage{(Inkscape) Color is used for the text in Inkscape, but the package 'color.sty' is not loaded}%
    \renewcommand\color[2][]{}%
  }%
  \providecommand\transparent[1]{%
    \errmessage{(Inkscape) Transparency is used (non-zero) for the text in Inkscape, but the package 'transparent.sty' is not loaded}%
    \renewcommand\transparent[1]{}%
  }%
  \providecommand\rotatebox[2]{#2}%
  \ifx\svgwidth\undefined%
    \setlength{\unitlength}{538.505561bp}%
    \ifx\svgscale\undefined%
      \relax%
    \else%
      \setlength{\unitlength}{\unitlength * \real{\svgscale}}%
    \fi%
  \else%
    \setlength{\unitlength}{\svgwidth}%
  \fi%
  \global\let\svgwidth\undefined%
  \global\let\svgscale\undefined%
  \makeatother%
  \begin{picture}(1,0.35116924)%
    \put(0,0){\includegraphics[width=\unitlength]{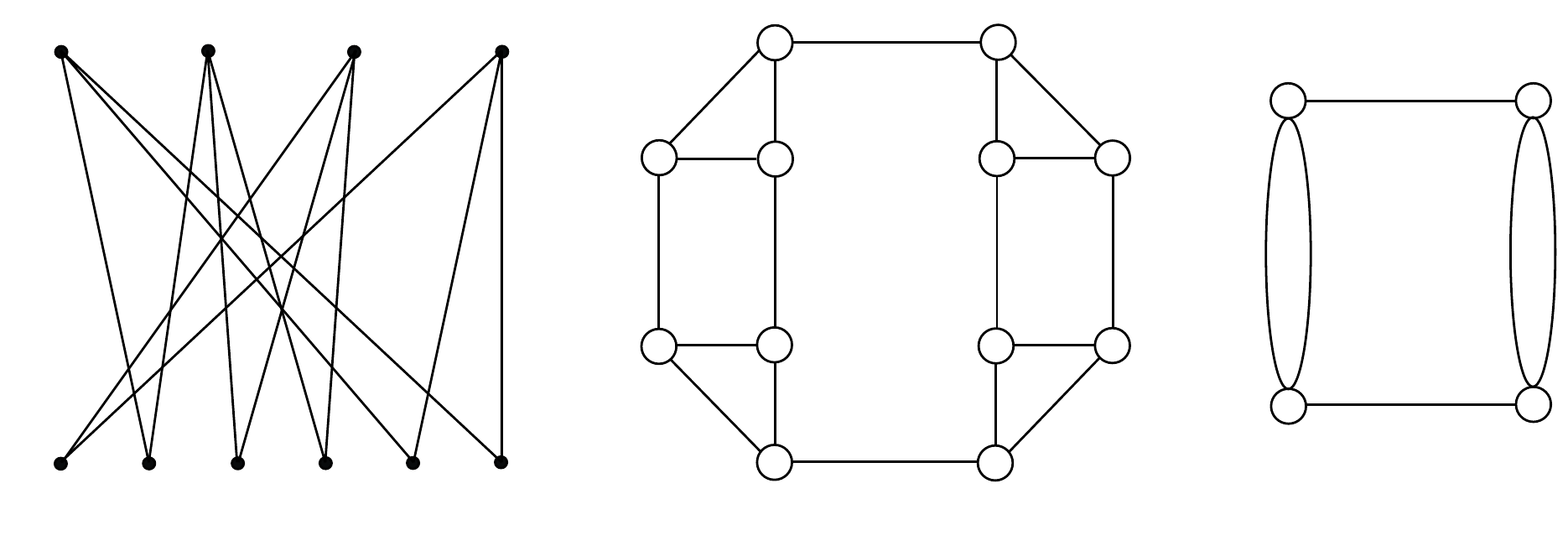}}%
    \put(0.03892959,0.32742776){\color[rgb]{0,0,0}\makebox(0,0)[b]{\smash{0}}}%
    \put(0.13286081,0.32822603){\color[rgb]{0,0,0}\makebox(0,0)[b]{\smash{1}}}%
    \put(0.22564486,0.32678421){\color[rgb]{0,0,0}\makebox(0,0)[b]{\smash{2}}}%
    \put(0.3196458,0.32693892){\color[rgb]{0,0,0}\makebox(0,0)[b]{\smash{3}}}%
    \put(0.03908423,0.03288868){\color[rgb]{0,0,0}\makebox(0,0)[b]{\smash{0}}}%
    \put(0.09514184,0.03337752){\color[rgb]{0,0,0}\makebox(0,0)[b]{\smash{1}}}%
    \put(0.15313001,0.03316297){\color[rgb]{0,0,0}\makebox(0,0)[b]{\smash{2}}}%
    \put(0.20832932,0.03353223){\color[rgb]{0,0,0}\makebox(0,0)[b]{\smash{3}}}%
    \put(0.26315943,0.03300826){\color[rgb]{0,0,0}\makebox(0,0)[b]{\smash{4}}}%
    \put(0.32001493,0.03288868){\color[rgb]{0,0,0}\makebox(0,0)[b]{\smash{5}}}%
    \put(0.04375689,0.25406444){\color[rgb]{0,0,0}\makebox(0,0)[b]{\smash{0}}}%
    \put(0.07789921,0.25363545){\color[rgb]{0,0,0}\makebox(0,0)[b]{\smash{1}}}%
    \put(0.10619,0.25357549){\color[rgb]{0,0,0}\makebox(0,0)[b]{\smash{2}}}%
    \put(0.12093148,0.25400449){\color[rgb]{0,0,0}\makebox(0,0)[b]{\smash{3}}}%
    \put(0.13968882,0.25412429){\color[rgb]{0,0,0}\makebox(0,0)[b]{\smash{4}}}%
    \put(0.15891436,0.25406444){\color[rgb]{0,0,0}\makebox(0,0)[b]{\smash{5}}}%
    \put(0.17650005,0.25443348){\color[rgb]{0,0,0}\makebox(0,0)[b]{\smash{6}}}%
    \put(0.20491031,0.25369518){\color[rgb]{0,0,0}\makebox(0,0)[b]{\smash{7}}}%
    \put(0.22912522,0.25443348){\color[rgb]{0,0,0}\makebox(0,0)[b]{\smash{8}}}%
    \put(0.26991728,0.25437387){\color[rgb]{0,0,0}\makebox(0,0)[b]{\smash{9}}}%
    \put(0.29483555,0.25357549){\color[rgb]{0,0,0}\makebox(0,0)[b]{\smash{10}}}%
    \put(0.33385171,0.25320622){\color[rgb]{0,0,0}\makebox(0,0)[b]{\smash{11}}}%
    \put(0.17104663,0.0061674){\color[rgb]{0,0,0}\makebox(0,0)[lb]{\smash{$\Gamma$}}}%
    \put(0.00114047,0.03294829){\color[rgb]{0,0,0}\makebox(0,0)[lb]{\smash{$Y$}}}%
    \put(-0.00183072,0.32788021){\color[rgb]{0,0,0}\makebox(0,0)[lb]{\smash{$X$}}}%
    \put(0.49437425,0.31876217){\color[rgb]{0,0,0}\makebox(0,0)[b]{\smash{0}}}%
    \put(0.42045849,0.2452189){\color[rgb]{0,0,0}\makebox(0,0)[b]{\smash{1}}}%
    \put(0.63530637,0.12519408){\color[rgb]{0,0,0}\makebox(0,0)[b]{\smash{7}}}%
    \put(0.63588005,0.24463451){\color[rgb]{0,0,0}\makebox(0,0)[b]{\smash{4}}}%
    \put(0.63478466,0.05054686){\color[rgb]{0,0,0}\makebox(0,0)[b]{\smash{6}}}%
    \put(0.63667943,0.31880312){\color[rgb]{0,0,0}\makebox(0,0)[b]{\smash{3}}}%
    \put(0.49456195,0.24435641){\color[rgb]{0,0,0}\makebox(0,0)[b]{\smash{2}}}%
    \put(0.70950497,0.12542644){\color[rgb]{0,0,0}\makebox(0,0)[b]{\smash{8}}}%
    \put(0.70958998,0.24500832){\color[rgb]{0,0,0}\makebox(0,0)[b]{\smash{5}}}%
    \put(0.49437425,0.05112968){\color[rgb]{0,0,0}\makebox(0,0)[b]{\smash{9}}}%
    \put(0.42049505,0.12688956){\color[rgb]{0,0,0}\makebox(0,0)[b]{\smash{10}}}%
    \put(0.49422838,0.12765587){\color[rgb]{0,0,0}\makebox(0,0)[b]{\smash{11}}}%
    \put(0.82161228,0.28156079){\color[rgb]{0,0,0}\makebox(0,0)[b]{\smash{0}}}%
    \put(0.97801168,0.28155344){\color[rgb]{0,0,0}\makebox(0,0)[b]{\smash{1}}}%
    \put(0.82184006,0.08674184){\color[rgb]{0,0,0}\makebox(0,0)[b]{\smash{3}}}%
    \put(0.97805682,0.08792917){\color[rgb]{0,0,0}\makebox(0,0)[b]{\smash{2}}}%
    \put(0.9014139,0.05352108){\color[rgb]{0,0,0}\makebox(0,0)[b]{\smash{multigraph  $\Gamma'$}}}%
    \put(0.80223561,0.24512182){\color[rgb]{0,0,0}\makebox(0,0)[b]{\smash{1}}}%
    \put(0.85119192,0.29113398){\color[rgb]{0,0,0}\makebox(0,0)[b]{\smash{0}}}%
    \put(0.84166234,0.24485052){\color[rgb]{0,0,0}\makebox(0,0)[b]{\smash{2}}}%
    \put(0.95888614,0.131387){\color[rgb]{0,0,0}\makebox(0,0)[b]{\smash{1}}}%
    \put(0.94750573,0.07880839){\color[rgb]{0,0,0}\makebox(0,0)[b]{\smash{0}}}%
    \put(0.99757169,0.13156042){\color[rgb]{0,0,0}\makebox(0,0)[b]{\smash{2}}}%
    \put(0.80246215,0.13113534){\color[rgb]{0,0,0}\makebox(0,0)[b]{\smash{1}}}%
    \put(0.85215969,0.0791497){\color[rgb]{0,0,0}\makebox(0,0)[b]{\smash{0}}}%
    \put(0.84174063,0.13130887){\color[rgb]{0,0,0}\makebox(0,0)[b]{\smash{2}}}%
    \put(0.95873987,0.24524647){\color[rgb]{0,0,0}\makebox(0,0)[b]{\smash{1}}}%
    \put(0.94780418,0.2908139){\color[rgb]{0,0,0}\makebox(0,0)[b]{\smash{0}}}%
    \put(0.99712897,0.24556813){\color[rgb]{0,0,0}\makebox(0,0)[b]{\smash{2}}}%
    \put(0.54191312,0.00958613){\color[rgb]{0,0,0}\makebox(0,0)[lb]{\smash{$L(\Gamma)$}}}%
  \end{picture}%
\endgroup%
\caption{An example of a bipartite graph $\Gamma$, its line graph $L(\Gamma)$, and the reduced multigraph obtained from the line graph by replacing $3$-cliques by single vertices. Szegedy's QW on $\Gamma$ using vectors $\ket{\phi_x}$ and $\ket{\psi_y}$ in uniform superposition is equivalent to a staggered QW on $L(\Gamma)$ using polygons induced by normalized vectors in uniform superposition, which is equivalent to the standard flip-flop coined QW on the multigraph $\Gamma'$ on the right-hand side with the three-dimensional Grover coin.} 
\label{fig:example4}
\end{figure}

We have used the staggered QW version of Szegedy's QW because in the original Szegedy's version there is an idle subspace spanned by the non-edges linking $X$ and $Y$ that hinders the decomposition of $R_0$ given by Eq.~(\ref{ht_RA}) into $I\otimes C$.\qed
\end{proof}

\section{Results for Non-Regular Graphs}\label{sec:GNR}

Theorem~\ref{theo1} can be generalized for non-regular flip-flop coined QWs, and Theorem~\ref{theo2} can be generalized for bipartite graphs that are not biregular. The proofs are given in the Appendix. 

\begin{theorem}\label{theo3}
A non-regular flip-flop coined QW such that $C'$ is an orthogonal reflection can be cast into the extended Szegedy's model.
\end{theorem}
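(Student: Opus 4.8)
The plan is to mirror the proof of Theorem~\ref{theo1}, replacing the factorized coin $I\otimes C$ by the block-diagonal coin $C'$ of Definition~\ref{def:nonregularQW}. First I would exploit the direct-sum structure of $C'$: it acts on ${\cal H}^{2|E|}$ as a direct sum of $|V|$ blocks $C'_v$, one per vertex $v$, each block acting on the $d_v$-dimensional subspace spanned by $\{\ket{v,j}:0\le j<d_v\}$. Since $C'$ is unitary and Hermitian, its $(+1)$-eigenspace splits as the orthogonal direct sum of the $(+1)$-eigenspaces of the blocks, so a complete orthonormal set of $(+1)$-eigenvectors can be chosen with each eigenvector supported on a single block. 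The hypothesis that $C'$ is an orthogonal reflection then forces each block $C'_v$ to be an orthogonal reflection in ${\cal H}^{d_v}$, inducing a disjoint union of cliques $\Gamma_{C'_v}$ on its $d_v$ directions. Writing $C'=2\sum_x\ket{\alpha_x}\bra{\alpha_x}-I$ with every $\ket{\alpha_x}$ living inside one block gives the non-regular analogue of Eq.~(\ref{C}).

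Next I would build the graph $\Gamma'(V',E')$ on which the equivalent staggered QW lives. As in Theorem~\ref{theo1}, each vertex $v$ of $\Gamma$ is replaced by the graph $\Gamma_{C'_v}$ induced by its block, and the gluing is dictated by the edges of $\Gamma$: a direction-$j$ vertex of $\Gamma_{C'_v}$ is linked through edge $j$ of $\Gamma$ and relabeled $(v,j)\in V'$, consistently with the pairing in (\ref{def_S2}). The only difference from the regular case is that the replacement blocks may now differ from vertex to vertex, both in dimension and in clique decomposition, but this does not affect the argument. Analyzing the non-overlapping supports of the $\ket{\alpha_x}$ shows that $C'$ is an orthogonal reflection of $\Gamma'$, whose induced polygons form the first (blue) tessellation and cover all vertices of $\Gamma'$.

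The second (red) tessellation comes from $S$ exactly as before. For every edge of $\Gamma$, equivalently every pair $\{(v,j),(v',j')\}$ identified by $S$ through (\ref{def_S2}), the combinations $\ket{\beta^+}=(\ket{v,j}+\ket{v',j'})/\sqrt{2}$ and $\ket{\beta^-}=(\ket{v,j}-\ket{v',j'})/\sqrt{2}$ are $(\pm1)$-eigenvectors of $S$. Since $S^2=I$ and $S$ permutes the $2|E|$ basis vectors as $|E|$ disjoint transpositions, there are $|E|$ independent $(+1)$-eigenvectors, so $S=2\sum\ket{\beta^+}\bra{\beta^+}-I$ is an orthogonal reflection whose polygons form a perfect matching of $\Gamma'$. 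The union of the two tessellations covers every edge of $\Gamma'$, and each nonempty intersection of a blue and a red polygon is a single vertex, hence the intersection of the tessellations contains no edge. Thus $U=SC'$ is a genuine two-tessellation staggered QW, and invoking Proposition~4.3 of Ref.~\cite{PSFG15} casts it into the extended Szegedy's framework, as required.

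The step I expect to require the most care is the first one: establishing rigorously that the orthogonal-reflection property of $C'$ descends to each block $C'_v$, so that the per-vertex clique unions $\Gamma_{C'_v}$ are genuine induced subgraphs, and that the glued graph $\Gamma'$ is consistently defined when the vertices of $\Gamma$ carry different degrees and different clique decompositions. Everything downstream is a direct transcription of the regular proof, since the eigenanalysis of the flip-flop shift and the invocation of Proposition~4.3 are insensitive to whether the coin is factorized or merely block-diagonal.
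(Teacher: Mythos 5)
Your proposal is correct and follows essentially the same route as the paper's own proof: decompose $C'=\bigoplus_v C_v$, show each block is an orthogonal reflection on ${\cal H}^{d_v}$ inducing a union of cliques $\Gamma_{C_v}$, glue these in place of the vertices of $\Gamma$ to build $\Gamma'$, take tessellation $\alpha$ from the padded block eigenvectors and tessellation $\beta$ from the $|E|$ independent $(+1)$-eigenvectors of $S$ (a perfect matching), and invoke Proposition~4.3 of Ref.~\cite{PSFG15}. The point you flag as delicate — that the orthogonal-reflection property of $C'$ descends to the blocks — is exactly the step the paper handles (somewhat more tersely, via the observation that eigenvectors of $C'$ are direct sums of block eigenvectors padded with zeros, so properties (1) and (2) are inherited in each ${\cal H}^{d_v}$), and your eigenspace-splitting argument fills it in correctly.
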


\begin{theorem}\label{theo4}
Let $\Gamma(X,Y,E)$ be a bipartite graph such that $\deg(y)=2$, $\forall y \in Y$. Suppose that $q_{yx}$ is either $1/2$ or $0$. Then Szegedy's QW on $\Gamma(X,Y,E)$ is equivalent to a non-regular flip-flop coined QW on a $|X|$-multigraph. 
\end{theorem}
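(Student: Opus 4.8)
The plan is to mirror the proof of Theorem~\ref{theo2}, replacing the single coin $C$ acting through $I\otimes C$ by a block-diagonal coin $C'$ of the form demanded by Definition~\ref{def:nonregularQW}. As before, I would first invoke Ref.~\cite{PSFG15} to pass from Szegedy's QW on $\Gamma(X,Y,E)$ to the equivalent staggered QW on the line graph $L(\Gamma)$. Since $\deg(y)=2$ for every $y\in Y$, we have $|E|=\sum_{y}\deg(y)=2|Y|$, so $L(\Gamma)$ has $2|Y|$ vertices, and its two tessellations are induced by the vectors $\ket{\alpha_x}$ and $\ket{\beta_y}$ of Eqs.~(\ref{ht_alpha_x})--(\ref{ht_beta_y}), with $f$ the edge-to-vertex bijection. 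The cliques $\alpha_x$ have sizes $d_x=\deg(x)$, while the cliques $\beta_y$ have exactly two vertices each and form a perfect matching of $L(\Gamma)$.

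First I would dispose of $U_1=2\sum_{y}\ket{\beta_y}\bra{\beta_y}-I$, which requires no new ideas beyond Theorem~\ref{theo2}. The hypotheses $\deg(y)=2$ and $q_{yx}\in\{1/2,0\}$ force each $\ket{\beta_y}$ to be the equal two-term superposition $\tfrac{1}{\sqrt 2}\big(\ket{f(x_1,y)}+\ket{f(x_2,y)}\big)$, exactly as in Eq.~(\ref{U1_theo2}); hence $U_1$ permutes basis vectors in disjoint pairs, satisfies $U_1^2=I$, and is therefore a flip-flop shift operator $S$ in the sense of Eq.~(\ref{def_S2}).

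The substantive difference lies in $U_0$. Dropping biregularity lets the cliques $\alpha_x$ have different sizes $d_x$, and dropping the common-sequence hypothesis lets the weights $p_{x,y}$ vary from clique to clique, so $U_0$ can no longer factor as $I\otimes C$. Ordering the edge labels locally so that the edges incident to $x$ carry labels $0,\dots,d_x-1$, I would write $\ket{\alpha_x}=\sum_{j=0}^{d_x-1}\sqrt{c_{x,j}}\,\ket{x,j}$ with $c_{x,j}=p_{x,y}$ for the corresponding $y$. Because distinct $\ket{\alpha_x}$ sit on disjoint cliques of $L(\Gamma)$ and hence have non-overlapping support, the operator $U_0=2\sum_x\ket{\alpha_x}\bra{\alpha_x}-I$ is block diagonal: $U_0=C'=\bigoplus_{x\in X}C_x$, where $C_x=2\ket{\psi_x}\bra{\psi_x}-I$ acts on the $d_x$-dimensional space of $x$ with $\ket{\psi_x}=\sum_{j}\sqrt{c_{x,j}}\,\ket{j}$. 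Each $C_x$ is an orthogonal reflection whose induced graph is a $d_x$-clique, since it has a single $(+1)$-eigenvector and that eigenvector has no zero entries (every $p_{x,y}$ with $(x,y)\in E$ is nonzero). This is precisely a direct sum of $|X|$ matrices of dimensions $d_1,\dots,d_{|X|}$, which is the non-regular coin of Definition~\ref{def:nonregularQW}.

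Finally I would specify the multigraph and assemble the walk. Collapsing each $d_x$-clique $\alpha_x$ of $L(\Gamma)$ to a single vertex, the matching edges $\beta_y$ descend to edges of an $|X|$-multigraph $\Gamma'$: two cliques joined by $k$ matching edges give a pair of vertices joined by $k$ parallel edges, and the degree of the vertex replacing $\alpha_x$ is $d_x$, matching the block sizes of $C'$. The evolution operator is then $U=U_1\,C'=S\,C'$, exactly the non-regular flip-flop coined QW of Definition~\ref{def:nonregularQW} on $\Gamma'$. I expect the only real obstacle to be bookkeeping rather than conceptual: one must fix the local edge labelings so that $C'$ decomposes as the prescribed direct sum under the ordering $\ket{x_1,0},\dots,\ket{x_1,d_1-1},\ket{x_2,0},\dots$ required by Definition~\ref{def:nonregularQW}, and verify that the flip-flop shift $S$ correctly glues the two directions at each $y$, including the degenerate cases where collapsing produces parallel edges (or loops, if $\Gamma$ is itself a multigraph).
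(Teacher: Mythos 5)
Your proposal is correct and follows essentially the same route as the paper's own proof: pass to the staggered QW on $L(\Gamma)$ via Ref.~\cite{PSFG15}, observe that the hypotheses on $Y$ make $U_1$ the flip-flop shift of Eq.~(\ref{U1_theo2}) exactly as in Theorem~\ref{theo2}, decompose $U_0$ as a direct sum $\bigoplus_{x\in X}C_x$ of rank-one reflections on the disjoint cliques (the paper phrases this as choosing $f$ so that $\ket{\alpha_x}\bra{\alpha_x}$ is block diagonal with a full block $M_x$ and setting $C_x=2M_x-I$), and collapse the $\alpha$-cliques to vertices to obtain the $|X|$-multigraph. The labeling bookkeeping you flag at the end is precisely what the paper absorbs into the phrase ``by using a proper choice of $f$,'' so there is no substantive difference.
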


We give an example that displays the underlying structure of the general proof. Let us start by describing a non-regular flip-flop coined QW on the graph $\Gamma$ depicted on the left-hand side of Fig.~\ref{fig:example6}. Let us use the three-dimensional Grover coin for the vertex of degree 3 and Hadamard coin for the vertices of degree 2.

\begin{figure}[h!] 
\centering
\includegraphics[scale=0.59]{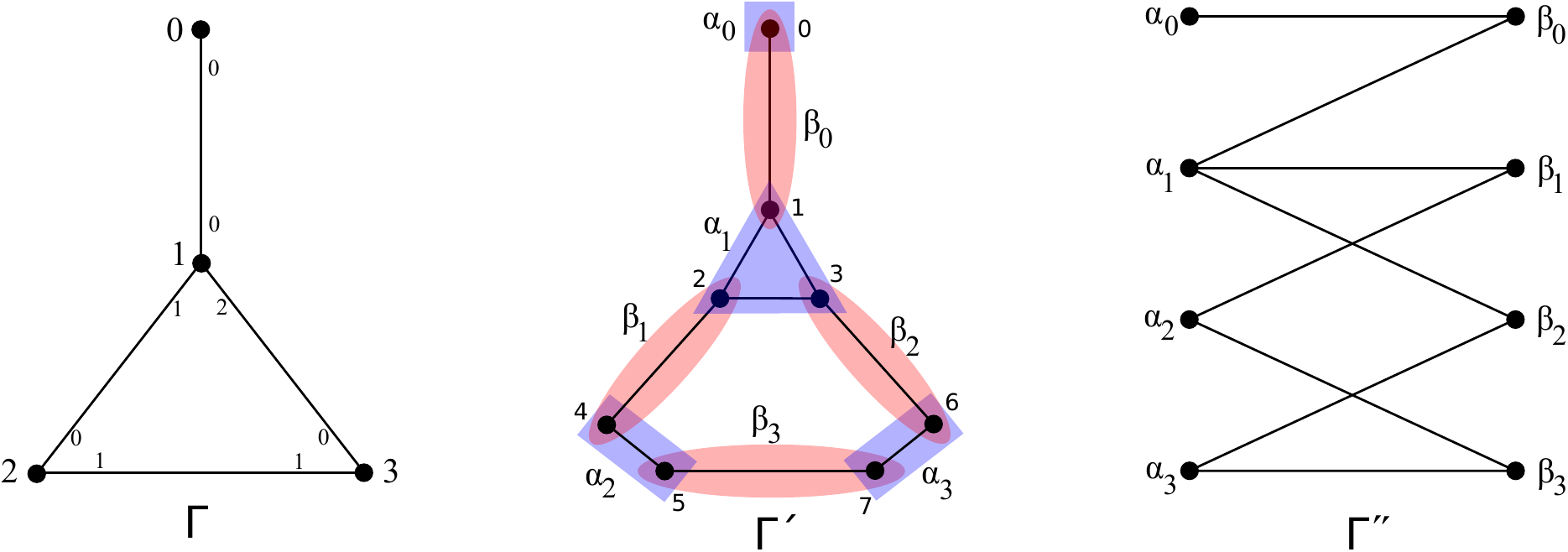}
\caption{Example of a non-regular flip-flop coined QW on graph $\Gamma$ and its equivalent Szegedy's version on the bipartite graph $\Gamma''$. The staggered model on $\Gamma' $ is used as a bridge to go from $\Gamma$ to $\Gamma''$.} 
\label{fig:example6}
\end{figure}

The coin operator is
\begin{equation}
	 C'\,=\,\left[ \begin {array}{cccc} 1&0&0&0\\  0&C&0&0
\\  0&0&H&0\\  0&0&0&H\end {array}
 \right],
\end{equation}
where $H$ is the Hadamard gate and 
\begin{equation}
	 C\,=\, \frac{1}{3}\left[ \begin {array}{ccc} -1&2&2\\  2&-1&2
\\  2&2&-1\end {array} \right].
\end{equation}
The shift operator is 
\begin{equation}
	S\,=\,\left[ \begin {array}{cccccccc} 0&1&0&0&0&0&0&0\\ 1
&0&0&0&0&0&0&0\\  0&0&0&0&1&0&0&0
\\  0&0&0&0&0&0&1&0\\  0&0&1&0&0&0&0
&0\\  0&0&0&0&0&0&0&1\\  0&0&0&1&0&0
&0&0\\  0&0&0&0&0&1&0&0\end {array} \right]. 
\end{equation}

The staggered QW graph equivalent to the coined QW graph is obtained by replacing each vertex of $\Gamma$ by a $d$-clique, where $d$ is the degree of the vertex.  Fig.~\ref{fig:example6} shows the resulting graph $\Gamma'$ with the induced tessellations. Each $d$-clique is a polygon in the tessellation $\alpha$ (blue), and the vertices incident to each edge of the original graph is a polygon of the tessellation $\beta$ (red). Vectors $\ket{\alpha_x}$ and $\ket{\beta_y}$ are given by
\begin{center}
\begin{minipage}{0.4\textwidth}
\begin{eqnarray*}
\ket{\alpha_0}&=&\ket{0}\\
\ket{\alpha_1}&=&\frac{\ket{1}+\ket{2}+\ket{3}}{\sqrt 3}\\
\ket{\alpha_2}&=&\frac{{\sqrt {2+\sqrt {2}}}\,\ket{4}+{\sqrt {2-\sqrt {2}}}\,\ket{5}}{2}\\
\ket{\alpha_3}&=&\frac{{\sqrt {2+\sqrt {2}}}\,\ket{6}+{\sqrt {2-\sqrt {2}}}\,\ket{7}}{2}
\end{eqnarray*}
\mbox{\,}
\end{minipage}
\begin{minipage}{0.4\textwidth}
\begin{eqnarray*}
\ket{\beta_0}&=&\frac{1}{\sqrt 2}\big(\ket{0}+\ket{1}\big)\\
\ket{\beta_1}&=&\frac{1}{\sqrt 2}\big(\ket{2}+\ket{4}\big)\\
\ket{\beta_2}&=&\frac{1}{\sqrt 2}\big(\ket{3}+\ket{6}\big)\\
\ket{\beta_3}&=&\frac{1}{\sqrt 2}\big(\ket{5}+\ket{7}\big)
\end{eqnarray*}
\mbox{\,}
\end{minipage}
\end{center}
where $\ket{\alpha_1}$ is the normalized $(+1)$-eigenvector of the three-dimensional Grover coin, $\ket{\alpha_2}$ and $\ket{\alpha_3}$ are each one the normalized $(+1)$-eigenvector of the Hadarmard gate, and $\ket{\beta_0}$ to $\ket{\beta_3}$ are the $(+1)$-eigenvectors of $S$. It is straightforward to check that 
\begin{eqnarray}
C'&=&2\sum_{j=0}^3 \ket{\alpha_j}\bra{\alpha_j} - I,\\
S &=&2\sum_{j=0}^3 \ket{\beta_j}\bra{\beta_j} - I,
\end{eqnarray}
showing that the staggered QW is equivalent to the coined QW because the evolution operators of those QWs are equal.

Graph $\Gamma''$ on the right-hand side is obtained from $\Gamma'$ by connecting the vertices of $\Gamma''$ associated with overlapping polygons, as explained in Ref.~\cite{PSFG15}. After performing this connecting procedure, $\Gamma'$ is the line graph of $\Gamma''$. Vectors $\ket{\phi_x}$ and $\ket{\psi_y}$ of Szegedy's QW on $\Gamma''$ are obtained from vectors $\ket{\alpha_x}$ and $\ket{\beta_y}$ employing the bijection between the vertices of $\Gamma'$ and the edges of $\Gamma''$: $\ket{0} \leftrightarrow \ket{0, 0}, \ket{1} \leftrightarrow \ket{1, 0}, \ket{2} \leftrightarrow \ket{1, 1}, \ket{3} \leftrightarrow \ket{1, 2}, \ket{4} \leftrightarrow \ket{2, 1}, \ket{5} \leftrightarrow \ket{2, 3}, \ket{6} \leftrightarrow \ket{3, 2}, \ket{7} \leftrightarrow \ket{3, 3}$. Notice that $W$ is in ${\cal H}^{16}$ while $S(I\otimes C)$ is in ${\cal H}^8$. The non-trivial part of $W$ is equal to the evolution operator of the coined QW or the staggered QW.

Since the staggered QW on $\Gamma'$ is equivalent to Szegedy's QW on $\Gamma''$ ~\cite{PSFG15}, it follows that the non-regular flip-flop coined QW on $\Gamma$ is equivalent to Szegedy's QW on $\Gamma''$.

\section{Searching Marked Vertices}\label{sec:searching}

One of the most successful methods to search marked vertices in the coined QW model is to use non-regular flip-flop coined QWs with two different coins: $(-I)$ on the marked vertices and the Grover coin on the non-marked ones. The normalized uniform superposition of all vertices is the initial condition to avoid any bias at the beginning.  This method is called abstract search algorithm~\cite{Ambainis:2005,Portugal:book} and  was used for the hypercube~\cite{Shenvi:2003}, two-dimensional lattice~\cite{Ambainis:2005,Tulsi:2008}, honeycomb network~\cite{Abal:2010}, triangular network~\cite{Abal:2011}, and various graphs~\cite{BW10,LW12}. Let us review this method in the context of non-regular graphs. The coin is defined by
\begin{equation}
C'\ket{v,j}\,=\,\begin{cases} -\ket{v,j} &\mbox{if } v \mbox{ is a marked vertex} \\
-\ket{v,j} + 2\ket{\psi}& \mbox{if } v \mbox{ is not a marked vertex}, \end{cases}
\end{equation}
where $\ket{\psi}=\frac{1}{d_v}\sum_{j=0}^{d_v-1}\ket{v,j}$ (the uniform superposition of all coin directions at vertex $v$), $d_v$ is the degree of vertex $v$, and we are using the notation $\ket{v,j}$ following the one described in the paragraph right after Definition~\ref{def:nonregularQW}. The evolution operator is $U=S\,C'$ characterizing a non-regular flip-flop coined QW (Definition~\ref{def:nonregularQW}) even if graph $\Gamma(V,E)$, on which the QW takes place, is regular. Let $C\in {\cal H}^{2|E|}$ be the usual coin with the Grover operator $G$ for all vertices (including the marked ones). The searching evolution operator $U$ can be written as $(S\,C)\cdot R$, where $(S\,C)$ is the evolution operator of a QW with no marked vertices and $R$ is a reflection that applies $(-G)$ on the marked vertices and $(+I)$ on the non-marked ones, because $C'=CR$. By using this fact and the spectrum of $(S\,C)$, Ref.~\cite{Ambainis:2005} was able to find two non-trivial eigenvectors of $U$ associated with the eigenvalues with the smallest positive argument, which enabled the authors to find analytically the time complexity of the algorithm for the spatial search problem on the two-dimensional lattice with one marked vertex. Using the evolution operator $U=S\,C'$, the probability at the marked vertex increases periodically allowing to find a marked vertex if one performs the measurement at the correct moment.

On the other hand, the quantum search method in Szegedy's framework is an extension of the classical method using random walks. The key concept in the classical case is the \textbf{hitting time}, which is the average time to hit a marked vertex for the first time using a random walk on a graph $\Gamma$ with stochastic matrix $P$ after specifying some initial condition. The classical hitting time can be calculated by converting the original graph $\Gamma$ into a new directed graph $\Gamma'$ (with a new stochastic matrix $P'$) by removing the edges that leave the marked vertices. Marked vertices are converted into sinks. The hitting time obtained using $P'$ is the same using $P$ because as soon as the walker hits a marked vertex using some edge that comes from a non-marked vertex, the walker needs not to go ahead. Szegedy proposed a quantum version of this procedure. Let $\Gamma(X,E)$ be the original classical graph, where $X$ is the set of vertices and $E$ the set of edges. Define $\Gamma(X,X',E')$ as a bipartite graph obtained from  $\Gamma(X,E)$ by duplicating $X$ and by converting edges $\{x_i,x_j\}\in E$ into $\{x_i,x'_j\}\in E'$. Until this point, no vertex has been marked and the quantum walk described in Definition~\ref{def:SzegedyQW} can be used taking $P=Q$. As before, define $\Gamma'(X,X',E'')$ as a directed bipartite graph by removing the edges of $\Gamma(X,X',E')$ that leaves the marked vertices of $X$ and $X'$. Add new non-directed edges connecting a marked $x$ with its corresponding copy $x'$ for all marked vertices. The quantum walk described in Definition~\ref{def:SzegedyQW} can be used taking $P'=Q'$, where $P'$ and $Q'$ are the new stochastic matrices of $\Gamma'(X,X',E'')$. With this framework and taking 
\begin{equation}
\ket{\psi_0}=\frac{1}{\sqrt n}\sum_{x y} \sqrt{p_{x y}} \ket{x,y}
\end{equation}
as initial condition in ${\cal H}^{n}\otimes {\cal H}^{n}$,  Szegedy showed that the detection problem on $\Gamma'(X,X',E')$ can be solved with a quadratic speedup compared with the time complexity of the same problem using random walks with symmetric and ergodic stochastic matrix $P$ in $\Gamma(X,E)$~\cite{Szegedy:2004}. In the detection problem, one does not calculate the probability of finding a marked vertex and, therefore, cannot be sure to have found the marked vertex. The searching problem on \textit{bipartite graphs} with a single marked vertex was addressed in Ref.~\cite{KMOR15}.

Szegedy's searching framework can be straightforwardly extended for generic bipartite graphs $\Gamma(X,Y,E)$, which are not obtained from the duplication process of simple classical graphs, but instead are obtained from the duplication process of \textit{directed} classical graphs. The key ingredient is to use sinks, which need not to be in both sets $X$ and $Y$. Since the elements of $X$ represent the physical positions while the elements of $Y$ are auxiliary copies, a good strategy is to use sinks only in $X$. To mark vertices in $X$, define a new directed bipartite graph $\Gamma'(X,Y,E')$ by removing the edges of $\Gamma(X,Y,E)$ that leave the marked vertices of $X$; they become sinks. Use the new stochastic matrices $P'$ and $Q'$ of $\Gamma'(X,Y,E')$ and the corresponding vectors~(\ref{ht_phi_x}) and~(\ref{ht_psi_y}) to define Szegedy's searching QW on the new directed bipartite graph. $P'$ has complete rows of zeroes corresponding to the marked vertices and it is not a stochastic matrix in the usual sense. $Q'$ does not change, because we are introducing sinks only in $X$.  If the initial condition is a uniform superposition of the edges of $\Gamma(X,Y,E)$, Szegedy's QW will find a marked vertex in the sense that the probability associated with the marked vertices will be high if one performs a measurement (projection on the computational basis of $X$) at the correct time. We call this extended method \textbf{Szegedy's searching framework}.

Coin-based search algorithms and Szegedy's searching framework seem to be very different. However, they are strongly related. Let us address the equivalence between the abstract search algorithm and Szegedy's framework using the following strategy: (1)~we review how to convert Szegedy's QWs with sinks into equivalent generalized staggered QWs following~\cite{PSFG15}, and (2)~we show how to convert coined QWs with marked vertices into equivalent generalized staggered QWs. The generalized staggered models coming from those two different directions are equivalent, if the stochastic matrix of Szegedy's QW obeys the premises of Theorem~\ref{theo4} and vectors $\ket{\phi_x}$ and $\ket{\psi_y}$ are the uniform superposition.

Let us address item (1)~by reviewing how Szegedy's QWs with sinks are converted into equivalent generalized staggered QWs. Ref.~\cite{PSFG15} showed that Szegedy's searching framework is included in the staggered searching method, which employs partial tessellations. The staggered QW graph is the line graph of the original bipartite graph (before creating sinks). Tessellation $\alpha$ is partial; it does not employ polygons with the vertices corresponding to the edges that were removed in the process of creating the sinks. Following this process, there will be edges in the line graph that do not belong to neither partial tessellation $\alpha$ nor tessellation $\beta$. Tessellation $\beta$ is the same used in the proof of Theorem~\ref{theo4}. For example, the directed bipartite graph $\Gamma''$ of Fig.~\ref{fig:figure9a} has one sink: vertex $4\in X$. Graph $\Gamma'$ (with the dashed edges) is the line graph of the graph with no sinks equivalent to  $\Gamma''$. $\Gamma'$ has periodic boundary conditions (in the form of a torus). In the figure, we label the polygons of tessellation $\alpha$ (blue) from 0 to 8 (4 is missing) and tessellation $\beta$ (red) from 0 to 17. There is an one-to-one mapping between the polygons of tessellation $\alpha$ and the vertex labels of $\Gamma''$ in $X$ (the same with respect to tessellation $\beta$ and labels in $Y$). The polygon of tessellation $\alpha$ corresponding to vertex $4\in X$ must be missing. The staggered QW on $\Gamma'$ with partial tessellation $\alpha$ and complete tessellation $\beta$ is equivalent to Szegedy's QW on $\Gamma''$ with vertex $4\in X$ as a sink.

\begin{figure}[h!] 
\centering
\includegraphics[scale=0.80]{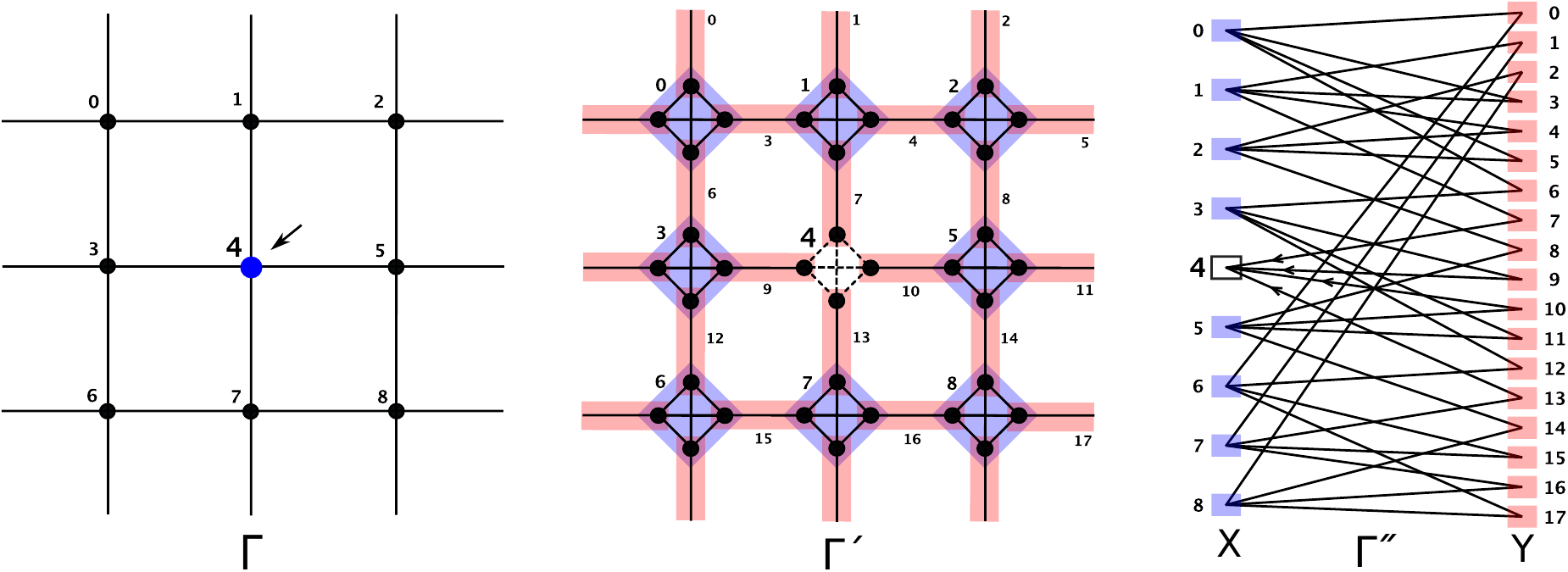}
\caption{A search algorithm using coined QW on a two-dimensional lattice $\Gamma$ with periodic boundary conditions using $(-I)$ on the marked vertex is equivalent to the generalized staggered QW on $\Gamma'$ with a missing polygon at the center of $\Gamma'$, which is equivalent to Szegedy's search on the directed bipartite graph $\Gamma''$ with vertex $4\in X$ as a sink.} 
\label{fig:figure9a}
\end{figure}

Let us address item (2). When we convert a non-regular flip-flop coined QW into an equivalent staggered QW, the vertices with the Grover coin are replaced by cliques while the vertices with coin $(-I)$ are converted into disconnected vertices (empty graphs) with no polygons because $(-I)$ has no $(+1)$-eigenvectors. In this case, tessellation $\alpha$ is partial. Tessellation $\beta$ is complete because there is no change in the shift operator. For example, the two-dimensional lattice $\Gamma$ with periodic boundary conditions of Fig.~\ref{fig:figure9a} with a marked vertex in the center (label-4 blue vertex with an arrow pointing to it) must be converted to graph $\Gamma'$ (without the dashed edges) in the middle of the figure. Notice that the graph and the tessellations obtained from $\Gamma$ coincide with the graph and tessellations obtained from the bipartite graph employed by Szegedy's searching model after removing the dashed edges.  The dashed edges can be removed because they play no role in the quantum-walk dynamics; they belong to no polygon. The non-regular flip-flop coined QW on $\Gamma$ with marked vertex 4 is equivalent to Szegedy's QW on $\Gamma''$ with vertex $4\in X$ as a sink in the sense that the evolution operators are exactly the same if we eliminate the idle space in the Szegedy's evolution operator and choose bases in the proper ordering. Notice that $\Gamma$ is not the classical graph associated with $\Gamma''$. The classical graph has 17 vertices and is a directed absorbing graph (the sink is vertex 4).

\begin{theorem}\label{theo5}
A non-regular flip-flop coined QW on a graph $\Gamma(V,E)$ with coin $(-I)$ on the marked vertices and the Grover coin on the non-marked vertices can be cast into Szegedy's searching framework.
\end{theorem}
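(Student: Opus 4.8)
The plan is to use the generalized staggered QW as the common bridge, mirroring the two-sided construction sketched in items~(1) and~(2) above. First I would write the searching evolution operator as $U=S\,C'$, where $C'$ acts as $(-I)$ on the marked vertices and as the Grover operator on the non-marked ones. This makes $U$ a \emph{non-regular} flip-flop coined QW in the sense of Definition~\ref{def:nonregularQW}, regardless of whether $\Gamma(V,E)$ happens to be regular, so the machinery developed for Theorems~\ref{theo1} and~\ref{theo3} applies.

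Next I would convert this coined QW into a generalized staggered QW following the construction of Theorem~\ref{theo3}. Each non-marked vertex $v$ of degree $d_v$ is replaced by a $d_v$-clique, and the single $(+1)$-eigenvector of the Grover coin (the uniform superposition over that clique) induces a polygon of tessellation $\alpha$. Each marked vertex carries the coin $(-I)$, which has \emph{no} $(+1)$-eigenvector, so the clique replacing it contributes no polygon; consequently the operator $U_0=2\sum_x\ket{\alpha_x}\bra{\alpha_x}-I$ is only a \emph{partial} orthogonal reflection and tessellation $\alpha$ is partial, with exactly the marked polygons missing. The shift $S$ still yields the complete tessellation $\beta$, a perfect matching of $2$-vertex polygons, precisely as in the proofs of Theorems~\ref{theo1} and~\ref{theo2}. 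Thus $U=U_1\,U_0$ is a genuine generalized staggered QW on the line graph $\Gamma'$.

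Finally I would identify this generalized staggered QW with Szegedy's searching framework. The all-Grover version, with no vertex marked, is by Theorem~\ref{theo4} equivalent to Szegedy's QW on a bipartite graph $\Gamma''$ with $\deg(y)=2$, $q_{yx}\in\{0,1/2\}$, and uniform superposition vectors $\ket{\phi_x}$ and $\ket{\psi_y}$; here the $Y$-side arises from tessellation $\beta$ and the $X$-side from tessellation $\alpha$. Marking vertex $v$, that is, replacing its Grover coin by $(-I)$, deletes the corresponding $\alpha$-polygon, which in the bipartite picture removes exactly the edges of $\Gamma''$ leaving the $X$-vertex $v$, turning $v$ into a sink. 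This is precisely the sink-creation of Szegedy's searching framework reviewed in item~(1) and in Ref.~\cite{PSFG15}. Since both the coined searching QW and Szegedy's searching QW with that sink reduce to the same generalized staggered QW on $\Gamma'$, with the same partial $\alpha$ and complete $\beta$, their evolution operators coincide on the active subspace once the idle space of Szegedy's operator is discarded and the bases are ordered consistently, as in Theorem~\ref{theo2}.

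The main obstacle is this last identification. One must verify that the edges of the line graph orphaned by the missing $\alpha$-polygon, which belong to no tessellation and thus carry no dynamics, are exactly the line-graph images of the edges removed when $v$ becomes a sink, and that after deleting Szegedy's idle subspace the two generalized staggered evolution operators agree entry by entry. This rests on the premises of Theorem~\ref{theo4}, namely degree~$2$ on $Y$ together with $q_{yx}\in\{0,1/2\}$, and on the uniform choice of $\ket{\phi_x}$ and $\ket{\psi_y}$, which together guarantee that both the Grover coin and the flip-flop shift are reproduced exactly by the stochastic data of $\Gamma''$.
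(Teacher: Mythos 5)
Your proposal is correct and follows essentially the same route as the paper's proof: convert the coined walk into a generalized staggered QW in which $(-I)$ on marked vertices yields a \emph{partial} tessellation $\alpha$ while the shift still gives the complete matching tessellation $\beta$, then identify the clique-completed graph as the line graph of a bipartite graph and observe that each missing $\alpha$-polygon corresponds precisely to a sink (removed outgoing edges) in Szegedy's searching framework, invoking Ref.~\cite{PSFG15} for the staggered--Szegedy equivalence. The only cosmetic difference is that you place the cliques (with their dynamically inert edges) at the marked vertices from the outset, whereas the paper first uses empty $d_v$-graphs and then adds those edges to form the line graph, noting they belong to no polygon.
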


\begin{theorem}\label{theo6}
Let $\Gamma(X,Y,E)$ be a bipartite graph such that $\deg(y)=2$, $\forall y \in Y$. Suppose that $q_{yx}$ is either $1/2$ or $0$. Then Szegedy's QW on $\Gamma(X,Y,E)$ with sinks in set $X$ is equivalent to a non-regular flip-flop coined QW on a $|X|$-multigraph $\Gamma(V,E')$ with coin $(-I)$ on the vertices $v\in V$ associated with the sinks of $X$. 
\end{theorem}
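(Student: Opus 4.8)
The plan is to run the line-graph reduction of Theorem~\ref{theo4} on the sinkless graph and then read off the effect of a sink as the deletion of a single clique-polygon, which on the coined side is exactly a $(-I)$ coin.

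First I would apply Theorem~\ref{theo4} to the sinkless bipartite graph $\Gamma(X,Y,E)$. Since $\deg(y)=2$ and $q_{yx}\in\{0,1/2\}$, this yields a staggered QW on the line graph $L(\Gamma)$ in which tessellation $\alpha$ has one clique-polygon $\ket{\alpha_x}$ per vertex $x\in X$, and tessellation $\beta$ is the perfect matching formed by the two-vertex polygons $\ket{\beta_y}$. Under the edge-vertex bijection $f$, the reflection $U_0=2\sum_{x}\ket{\alpha_x}\bra{\alpha_x}-I$ is block diagonal over the coin spaces, $U_0=\bigoplus_{x}\big(2\ket{\psi_x}\bra{\psi_x}-I_{d_x}\big)$, where $\ket{\psi_x}$ is the restriction of $\ket{\alpha_x}$ to the $d_x$ directions at the vertex $v$ that replaces the clique of $x$; and $U_1=2\sum_{y}\ket{\beta_y}\bra{\beta_y}-I$ is the flip-flop shift $S$ of Eq.~(\ref{U1_theo2}). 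Collapsing each clique of $L(\Gamma)$ to a vertex produces the multigraph $\Gamma(V,E')$ and the coined operator $U=S\,C'$, where $C'$ is the direct sum of the blocks $C_x=2\ket{\psi_x}\bra{\psi_x}-I_{d_x}$.

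Next I would introduce the sinks. Marking $x\in X$ zeroes the row $p'_{x\cdot}$, so the term $\ket{\phi_x}\bra{\phi_x}$ drops out of $R_0$; in the line-graph picture this deletes the polygon $\ket{\alpha_x}$ and makes tessellation $\alpha$ partial, exactly as reviewed in item~(1). The decisive point is that this deletion is blockwise: each $x$ owns a single clique-polygon supported entirely inside the coin block of its vertex $v$, so removing $\ket{\alpha_x}$ turns that block from $2\ket{\psi_x}\bra{\psi_x}-I_{d_v}$ into the reflection with empty $(+1)$-eigenspace. A unitary Hermitian operator on ${\cal H}^{d_v}$ with no $(+1)$-eigenvector must equal $-I_{d_v}$, so the coin at $v$ becomes $C_v=-I$, while every non-sink block is untouched. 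Because we mark only inside $X$ we have $Q'=Q$, hence the vectors $\ket{\beta_y}$ and the shift $U_1=S$ are unchanged. Thus the partial reflection $U_0^{\mathrm{sink}}=2\sum_{x\notin M}\ket{\alpha_x}\bra{\alpha_x}-I$ coincides with the coin operator $C'$ that applies $(-I)$ on the sink vertices and leaves the other coins intact, and the sink walk $W=R_1\,R_0^{\mathrm{sink}}$ equals $U=S\,C'$.

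The main obstacle is making the passage between Szegedy's original $mn$-dimensional operator and the line-graph operator airtight once sinks are present. In the original formulation the non-edges span an idle subspace on which $R_0$ and $R_1$ both act as $-I$, and creating a sink leaves the edges incident to $x$ covered by $\beta$ but no longer by $\alpha$; I would verify that the idle subspace stays invariant and decouples from the span of the edge vectors, so that the non-trivial part of $W$ is exactly $S\,C'$ on ${\cal H}^{|E|}$, just as in the proof of Theorem~\ref{theo2}. Once the idle block is shown to split off, the equivalence is immediate: the block-diagonal form of $U_0^{\mathrm{sink}}$ makes the $(-I)$-coin description an exact operator identity rather than a merely spectral one, and the construction of item~(2) recovers the same staggered QW from the coined side.
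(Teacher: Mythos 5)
Your proposal is correct and takes essentially the same route as the paper, which proves Theorem~\ref{theo6} as a one-line corollary of Theorem~\ref{theo4} combined with the method of Ref.~\cite{PSFG15} for converting Szegedy's QWs on bipartite graphs with sinks into generalized staggered QWs, in which each missing polygon is converted into the coin $(-I)$. Your write-up simply makes explicit what the paper leaves implicit: the blockwise identification of a deleted polygon $\ket{\alpha_x}$ with a $(-I)$ coin block, the invariance of the shift because $Q'=Q$, and the decoupling of the idle subspace.
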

The proofs are in the Appendix.

\section{Conclusions}\label{sec:conc}

In this work, we have showed that the coined and Szegedy's models have in common a large class of QWs. Under some assumptions, we can convert the graph on which the coined QW takes place into a bipartite graph on which an equivalent Szegedy's QW takes place, and vice versa. The equivalence means that the QW state at any step $t$ of one model can be exactly obtained using the other model. To go from the coined model on graph $\Gamma$ to Szegedy's model on graph $\Gamma''$, we have to replace each vertex of $\Gamma$ by a clique (for the Grover coin) obtaining a new enlarged graph $\Gamma'$ on which an equivalent staggered QW is defined. As a last step, we have to find the bipartite graph $\Gamma''$, the line graph of which is (isomorphic to) $\Gamma'$. On the other direction, we start with a bipartite graph $\Gamma''$ on which Szegedy's QW takes place, and we have to find the line graph of $\Gamma''$, convert cliques of one tessellation into vertices, which generates a new graph, or multigraph in some cases, on which the coined QW takes place. The staggered QW model plays a key role in the conversion process, because Szegedy's model has an idle subspace which hinders the direct conversion from Szegedy's to the coined model.

Remarkably, the abstract search algorithm using the coined QW model on (non-regular) graphs can be cast into Szegedy's searching framework, which is based on bipartite graphs with sinks. When converting from the coined to the staggered model, the coin $(-I)$ represents a missing polygon in one tessellation, which is converted into a sink in the equivalent Szegedy's model. The process is true on the other way around under some restrictions on the stochastic matrix of the bipartite graph; Szegedy's QWs with sinks can be converted into an equivalent search algorithm in the coined model using $(-I)$ on the marked vertices. One restriction is the degree of the vertices of set $Y$ must be 2. Szegedy's QWs that do not obey this restriction are not equivalent to coined QWs. In this sense, Szegedy's model is more general. On the other hand, coined QWs using coins that are not reflections, such as the Fourier coin~\cite{Portugal:book}, cannot be cast into Szegedy's model. In this sense, coined QW model is more general.

In conclusion, Szegedy's and the coined QW models share a large class of QW instances. However, there are Szegedy's QWs that cannot be converted into the coined formalism using the standard and non-regular flip-flop coined QWs defined in Sec.~\ref{sec:MD}, and vice versa. Since Szegedy's model is a subset of the staggered model, we also conclude that coined and staggered models are not equivalent. To pursue further connections between Szegedy's (or staggered) and coined models, one has to generalize the definitions of those models.

As a byproduct, we have showed how to convert coined QWs into coinless QWs in an enlarged graph, when the coin is an orthogonal reflection. The coin becomes a unitary operator acting on the new vertices and edges of the extended graph. The coined model can be understood in a new way, which may help in experimental implementations or in decoherence analysis.

\section*{Acknowledgements}
The author acknowledges financial support from Faperj (grant n.~E-26/102.350/2013) and CNPq (grants n.~304709/2011-5, 4741\-43/2013-9, and 400216/2014-0). The author thanks useful discussion with Raqueline A.M.~Santos, Tharso D.~Fernandes, Stefan Boettcher, Andris Ambainis, and the quantum computing group of LNCC.

\section*{Appendix}

\subsection*{Proof of Theorem~\ref{theo3}}

Suppose that we have a well-defined non-regular flip-flop coined QW on a graph $\Gamma(V,E)$ with coin $C'$ being an orthogonal reflection. As an intermediate step, we use a staggered QW on a graph $\Gamma(V',E')$ with two tessellations equivalent to the coined QW. If $C'=C_1\oplus ... \oplus C_{|V|}$, the eigenvectors of $C'$ are the direct sum of eigenvectors of $C_v$ and zero vectors, for $1\le v\le |V|$. Let $\ket{\tilde\alpha_x^{(v)}}$, $0\le x < m_v$ be an orthonormal basis for the $(+1)$-eigenspace of $C_v$ and let $\ket{\alpha_{x_v}^{(v)}}$ be the corresponding eigenvectors of $C'$ obtained from $\ket{\tilde\alpha_{x_v}^{(v)}}$ by performing the necessary direct sums with zero vectors. If $C'$ is an orthogonal reflection of graph $\Gamma'(V',E')$, it can be written as
\begin{eqnarray}\label{Cprime2}
  C' &=& 2\,\sum_{v=1}^{{|V|}} \sum_{x_v=0}^{m_v-1} \ket{\alpha_{x_v}^{(v)}}\bra{\alpha_{x_v}^{(v)}} - I,
\end{eqnarray}
where the set of $(+1)$-eigenvectors $\ket{\alpha_{x_v}^{(v)}}$ has the following properties: (1)~if the $i$-th entry of $\ket{\alpha_{x_v}^{(v)}}$ is nonzero, the $i$-th entries of the other $(+1)$-eigenvectors must be zero, and (2)~vector $\sum_{v=1}^{{|V|}}\sum_{x_v=0}^{m_v-1} \ket{\alpha_{x_v}^{(v)}}$ has no zero entries. Then each $C_v$ can be written as
\begin{eqnarray}
  C_v &=& 2\sum_{x=0}^{m_v-1} \ket{\tilde\alpha_x^{(v)}}\bra{\tilde\alpha_x^{(v)}} - I \label{Cv}
\end{eqnarray}
and the set of vectors $\ket{\tilde\alpha_x^{(v)}}$ inherit properties~(1) and~(2)~in ${\cal H}^{d_v}$. Each $C_v$ is an orthogonal reflection in ${\cal H}^{d_v}$, where $d_v$ is the degree of vertex $v$ in $\Gamma(V,E)$, and has an associated  $d_v$-graph $\Gamma_{C_v}$ tessellated by the $(+1)$-eigenvectors of $C_v$. $\Gamma_{C_v}$ is a union of $m_v$ disjoint cliques. If $C_v$ has only one $(+1)$-eigenvector ($m_v=1$), $\Gamma_{C_v}$ is a clique.

Graph $\Gamma'(V',E')$ is obtained from $\Gamma(V,E)$ by replacing each vertex $v\in V$ by graph $\Gamma_{C_v}$  gluing the vertices of $\Gamma_{C_v}$, which run from 0 to $d_v-1$, in one-to-one mapping with the labels of the coin directions at vertex $v$. The vertices of $\Gamma_{C_v}$ after the gluing process receive the labels of the basis vectors in $\ket{\alpha_{x_v}^{(v)}}$ with nonzero coefficients. For example, the vertices of the 3-clique in graph $\Gamma'$ of Fig.~\ref{fig:example6} has labels 1, 2, and 3 because they correspond to the basis vectors of the $(+1)$-eigenvector $\ket{\alpha_1}=(\ket{1}+\ket{2}+\ket{3})/\sqrt 3$. 

Polygons induced by $\ket{\alpha_x^{(v)}},\,\forall v,x$ tessellate $\Gamma'(V',E')$ because polygons induced by $\ket{\alpha_{x}^{(v)}}$, $0\le x<m_v$ exactly cover the graphs $\Gamma_{C_v}$ that replace vertices $v$. Tessellation $\alpha$ covers all vertices and all edges that were added via  $\Gamma_{C_v}$, for all $v$. This tessellation does not cover the edges of $\Gamma'(V',E')$ that were inherited from the original graph  $\Gamma(V,E)$. 

Tessellation $\beta$ is made of size-2 polygons that cover the edges of $\Gamma'(V',E')$ that were inherited from the original graph  $\Gamma(V,E)$. This tessellation has $|E|$ polygons and the set of those polygons has an one-to-one mapping with an independent set of $(+1)$-eigenvectors of $S$ in the computational basis,  which are given by vectors
\begin{equation}
   \ket{\beta_{v}^{j}} \,=\, \frac{1}{\sqrt 2}\left(\ket{v,j}+\ket{v',j'}\right)
\end{equation}
using the notation of Eq.~(\ref{def_S2}), where $v\in V$ and $0\le j\le d_v-1$. The cardinality of the independent set of $(+1)$-eigenvectors of $S$ is $|E|$. The shift operator of the non-regular flip-flop coined QW is
\begin{equation}
	S\,=\, 2 \sum \ket{\beta_{v}^{j}}\bra{\beta_{v}^{j}}-I
\end{equation}
where the sum runs over the set of independent $(+1)$-eigenvectors (the sum has $|E|$ terms). $S$  is an orthogonal reflection because the set of independent $(+1)$-eigenvectors has non-overlapping nonzero entries and the sum of those eigenvectors has no zero entries in the computation basis of $\Gamma'(V',E')$. Polygons induced by $\ket{\beta_{v}^{j}}$ form a perfect matching of $\Gamma'(V',E')$. 

The union of tessellations $\alpha$ and $\beta$ covers all edges and is a well-defined staggered QW having one vertex in each polygon intersection. Using Proposition~4.3 of Ref.~\cite{PSFG15}, this staggered QW can be cast into the extended Szegedy's framework.\qed

\subsection*{Proof of Theorem~\ref{theo4}}

Consider the staggered QW model on the line graph $L(\Gamma)$ equivalent to Szegedy's QW on $\Gamma(X,Y,E)$. $L(\Gamma)$ has $2|Y|$ vertices. The polygons of the staggered model are induced by
\begin{eqnarray}
  \ket{\alpha_x} &=&  \sum_{y\in Y} \sqrt{p_{x y}} \, \ket{f(x,y)}, \label{ht_alpha_x2} \\
  \ket{\beta_y}  &=&  \sum_{x\in X} \sqrt{q_{y x}}\, \ket{f(x,y)}, \label{ht_beta_y2}
\end{eqnarray} 
where $f$ is the bijection between $E$ and the vertices of $L(\Gamma)$ as described in Ref.~\cite{PSFG15} and vectors $\ket{\alpha_x},\ket{\beta_x}$ belong to Hilbert space ${\cal H}^{2|Y|}$. 

Tessellation $\alpha$ is induced by the orthogonal reflection
\begin{eqnarray}
  U_0 &=& 2\sum_{x\in X} \ket{\alpha_x}\bra{\alpha_x} - I. 
\end{eqnarray}
By using a proper choice of $f$, matrix $\ket{\alpha_x}\bra{\alpha_x}$ is a direct sum of zeros matrices and a $d_x\times d_x$ matrix $M_x$, which has no zero entries. Define 
\begin{equation}
	C_x\,=\, 2 M_x - I.
\end{equation}
Then 
\begin{equation}\label{coin_C_Sz2}
	U_0 \,=\, \bigoplus_{x\in X} C_x.
\end{equation}

Operator $U_1$ is equal to the one described in the proof of Theorem~\ref{theo2} and is given by Eq.~(\ref{U_1_theo2}) because the assumptions about vertices $y\in Y$ are equal to the ones in Theorem~\ref{theo2}. Then $U_1$ commutes basis vectors and $U_1^2=I$. 

The evolution operator is
\begin{equation}
U \,=\, U_1 \,U_0,
\end{equation}
where $U_0$ is the coin operator given by Eq.~(\ref{coin_C_Sz2}) and $U_1$ is the shift operator given by Eq.~(\ref{U1_theo2}). $U$ is an evolution operator of a non-regular flip-flop coined QW on the (multi)graph obtained in the following way: The polygons of tessellation $\beta$ have two vertices and form a perfect matching of $L(\Gamma)$. The remaining cliques belong to tessellation $\alpha$. Each clique of tessellation $\alpha$ must be converted into a single vertex. If two cliques of tessellation $\alpha$ are connected by an edge, the vertices that replace those cliques are adjacent.  If two cliques are connected by more than one edge, the vertices that replace those cliques must be connected by more than one edge generating a non-regular $|X|$-multigraph. \qed

\subsection*{Proof of Theorem~\ref{theo5}}

\begin{proof}
The method employed in the proof of Theorem~\ref{theo3} when the coin is an orthogonal reflection can be straightforwardly extended  when the coin is a \textit{partial} orthogonal reflection. In this case, we can convert a non-regular flip-flop coined QW on a graph $\Gamma(V,E)$ with coin $(-I)$ on the marked vertices and the Grover coin on the non-marked vertices into an equivalent \textit{generalized} staggered QW on $\Gamma'(V',E')$, which is obtained from  $\Gamma(V,E)$ in the following way: a non-marked vertex $v\in V$ is converted into $d_v$-cliques and a marked vertex $v$ into disconnected $d_v$-graphs (empty $d_v$-graphs), where $d_v$ is the degree of vertex $v$. Tessellation $\alpha$ is partial, with polygons being the cliques associated with non-marked vertices only. Tessellation $\beta$ is the same employed in the proof of Theorem~\ref{theo3}.

The next step is to define a new graph $\Gamma''(V',E'')$ by converting the empty $d_v$-graphs into complete graphs by adding new edges to $\Gamma'(V',E')$. Let $\tilde\alpha$ be an extension of partial tessellation $\alpha$ by adding new polygons corresponding to the new complete graphs.   $\Gamma''(V',E'')$ is the line graph of some bipartite graph $\Gamma(X,Y,\tilde E)$ because the union of tessellations $\tilde\alpha$ and $\beta$ form a two-colorable Kraus partition of $\Gamma''(V',E'')$. 

We have defined a generalized staggered QW on $\Gamma'(V',E')$, which is equivalent to a generalized staggered QW on $\Gamma''(V',E'')$ using partial tessellation $\alpha$ because the egdes in $E''\setminus E'$ do not belong to any polygon. Following Ref.~\cite{PSFG15} we can obtain an equivalent Szegedy's QW; the missing polygons in partial tessellation $\alpha$ create sinks in graph $\Gamma(X,Y,\tilde E)$ by removing the directed edges coming out of the vertices in $X$ associated with the missing polygons. The edges oriented to the sinks are kept. This process creates a new directed bipartite graph $\Gamma'(X,Y,\tilde E')$. Ref.~\cite{PSFG15} showed that Szegedy's QW on $\Gamma'(X,Y,\tilde E')$ with vectors $\ket{\phi_x}$ and $\ket{\psi_y}$ given by Eqs.~(\ref{ht_phi_x}) and~(\ref{ht_psi_y}) in uniform superposition is equivalent to the generalized staggered QW on $\Gamma'(V',E')$. Then the non-regular flip-flop coined QW on a graph $\Gamma(V,E)$ with coin $(-I)$ on the marked vertices and the Grover coin on the non-marked vertices can be cast into Szegedy's searching framework.  \qed
\end{proof}

\subsection*{Proof of Theorem~\ref{theo6}}
\begin{proof}
This theorem is a corollary of Theorem~\ref{theo4}, if we employ the method described in Ref.~\cite{PSFG15} of converting Szegedy's QWs on bipartited graphs with sinks into generalized staggered QWs. To convert generalized staggered QWs into the coined QW model, a missing polygon is converted into coin $(-I)$. \qed
\end{proof}


\end{document}